\documentclass[12pt]{article}

\clubpenalty=100000
\widowpenalty=100000
\usepackage[utf8]{inputenc}
\usepackage{makeidx}
\usepackage{hyperref}
\usepackage{soul}
\makeindex

\usepackage{color}
\setstcolor{red}
\setul{}{0.25ex}

\title{The equivalence of local-realistic and no-signalling theories}
\author{Paul Raymond-Robichaud 
\footnote{
 Gilles Brassard was an author  of the \href{https://arxiv.org/abs/1710.01380v1}{first version}  of this paper. Indeed, he improved tremendously the presentation of the ideas   and participated in innumerable discussions with the author. Nevertheless, he retrospectively felt it was unfair to have been an author, especially the first author, given that many people falsely attributed the work and the ideas to him, while he was the voice of this work rather than its creator.  He proposed to withdraw from the article  in order to let his student shine and ensure proper attribution to the ideas.  I assured him that he could remain as a second author, but
  this was not an option for him because, as he had already explained to me
  years earlier, alphabetical order for authors' names is customary in
  his fields, and he has followed it strictly throughout his career.
 Thus,  he could have been an author and he continues to endorse with enthusiasm the ideas presented here. I  dedicate this paper to him and his invaluable friendship, since this work would  not have been possible without him.  }
 \\ 
{\normalsize ISI Foundation, Turin, Italy}\\ 
{\normalsize paul.r.robichaud@gmail.com}
}

\usepackage{verbatim}
\usepackage{lipsum}

\usepackage{amsthm}
\usepackage{times}
\usepackage{amsmath,amssymb}
\usepackage[english]{babel}
\usepackage[utf8]{inputenc}

\usepackage{tikz}
\usepackage{url}
\usetikzlibrary{decorations.pathreplacing}

\usetikzlibrary{matrix}
\usepackage[matrix,frame,arrow]{xypic}

\theoremstyle{definition}
\newtheorem{definition}{Definition}[section]

\newtheorem{lemma}{Lemma}[section]

\newtheorem{theorem}{Theorem}[section]

\newtheorem{postulate}{Postulate}[section]

\newtheorem{axiom}{Axiom}[section]

\newcommand{\isdef}{\stackrel{\mathrm{def}}{=}}

\newcommand{\prob}{\mathrm{Prob}}

\addtolength{\parskip}{1ex}

\begin{document}

\maketitle
 
\begin{abstract}
We provide a framework that describe all local-realistic theories and all \mbox{no-signal}\-ling theories. We show that every local-realistic theory is a no-signalling theory. We also show that every no-signalling theory with invertible dynamics has a local-realistic model.
This applies in particular to unitary quantum theory.
\end{abstract}

\pagebreak

\section*{Acknowledgements}

My deepest gratitude goes to Gilles Brassard, who could
have been an author of this paper and whose role was explained in the title page footnote.

I am grateful to David Deutsch  for countless stimulating discussions and for providing numerous suggestions for improvement to a draft of this article.

I acknowledge stimulating discussions with Stefan Wolf, Renato Renner, Sandu Popescu, Marcin Paw{\l}owski,  Dominic Mayers,  Chiara Marletto, St\'ephane Durand, Giulio Chiribella,   Jeff Bub, Michel Boyer,  Charles Bennett,  and Charles Alexandre B\'edard.

This work has been supported in part by the Natural Sciences and Engineering Research Council of Canada, the Fonds de recherche du Qu\'ebec -- Nature et technologies and from Intesa Sanpaolo Innovation Center. The funders had no role in study design, data collection, and analysis, decision to publish, or preparation of the manuscript.

\section{Introduction}
This article presents original formal axioms defining  local-realistic theories 
and no-signalling theories.  These axioms attempt to be  the most general possible and are required to prove formally any  result that apply to  \emph{all} local-realistic theories and \emph{all} no-signalling theories.

The axioms  are formulated without any reference to interpretations of the theory.
However, the interpretation with which a given  no-signalling theory is presented may include additional statements  beyond its strict formalism,  which once properly formalized might be  inconsistent  with the axioms of local-realism or with a particular local-realistic model of it.   Thus, while the axioms of local-realism  do not make assumptions regarding the interpretation of a given theory, their supposition   nevertheless rules out interpretations of it.

This is in sharp contrast with the local-hidden variable formulation of local-realism introduced by Bell \cite{Bell64}.  This formulation   implicitely add interpretative assumptions  beyond the notion of local-realism \cite{ParLives, DH}.  Whenever a theory has a local-realistic model that cannot be described by local-hidden variables, this merely prove that  the  model is incompatible with these extraneous assumptions.
The simplest example of such a theory is the non-local box, introduced by Popescu and Rorhlich, which has a local-realistic interpretation and yet cannot be described by local-hidden variables  \cite{FreeWill, ParLives, PR, POSTER}. The most interesting example is certainly quantum theory, since it is the current theory of Nature, and David Deutsch and Patrick Hayden have proven that it has a local-realistic model, despite all its seemingly non-local features like entanglement \cite{DH, QMlocal}.

Once the notion of  local-realistic theories and no-signalling theories is properly axiomatised, we shall see that every local-realistic theory is trivially also a no-signalling theory. 
The reason is that the defining property of  a no-signalling theory is that  no action on a system can have any  \emph{observable} effect on a separated system,  while a key property of a local-realistic theory is that no  action   on a system can have any   effect \emph{whatsover} on a  separated system.

Conversely, we shall then prove   that under very minor postulates  \emph{any} given invertible-dynamics no-signalling theory   has a local-realistic model.
The proof of this fact emerged from the construction of a rigorous local-realistic model for quantum theory and from a formalization of the concept of local-realism  \emph{independent} of quantum theory \cite{QMlocal}.    
Once this was done, it soon appeared that  the principles defining local-realistic theories and the core ideas involved in  the creation of a local-realistic model for quantum theory could be used to  create a local-realistic model for   an arbitrary no-signalling theory with invertible dynamics.

Thus, this paper shares many ideas of the previous companion paper \cite{QMlocal}, including the same informal principles of local-realism, the same framework for local-realistic theories, except for the omission of Axiom \ref{ax:faithnoume}, which was not essential and  for the projection axioms (Axioms \ref{ax:noumeproj} and \ref{tracepheno}), which were previous presented in term of traces. Also, the theorems and proofs presented here emerged from previous theorems and proofs. However, in the current paper, quantum theory is viewed as a theory among many, and our sole interest in it  is  to show that it satisfies the formal requirements of a no-signalling theory with invertible dynamics, and thus that the construction presented here applies to it.
   Moreover,  we shall discuss  the philosophical aspects of local-realism and the axioms of local-realistic theories in much greater detail and without specific reference to quantum theory and finally various theorems that were not needed or proved in the previous paper are proved here.

All the key  ideas in the current paper  were presented in a chapter of the thesis of the author, performed under the supervision of Gilles Brassard \cite{PRRthesis}.  However, contrarily to this previous presentation, here local realistic theories and no-signalling theories are described through explicit axioms, and the hypotheses used to proved the main result are described by explicit postulates.  Equally the material in the appendices is new.   Finally,  for reasons of concision  the material in the thesis on how the axioms of local-realism could be extended to deal with aspects concerning infinitely many systems has been omitted.

\section{Conceptual foundations of local-realism}

Before stating the axioms that characterize a local-realistic theory, it would be useful to investigate  the concept of realism and local-realism in order to motivate some mathematical properties that will be explained rigorously later.  
But first,  what is realism?

Realism~is the idea that  their exists  a world  outside of our immediate subjective experience and that the state of this world determines the outcome of all observations. 

\subsection{Appearance vs reality}\label{sc:concepts}

This outside world can be called the \emph{real world},  
the \emph{external} world, the \emph{objective} world, or in Kant's terminology: the \emph{noumenal} world~\cite{Kant-Brit,Kant}.
It~describes the world as it is
rather than the world as it can observed, or known through sensory experience.  

According to realism, our subjective experience, our perceptions, our sense-data, are determined by the state of the external world. The portion of the real world that is observable or perceptible is called the \emph{perceptive world}, or alternatively the \emph{observable} world, or in Kant's terminology: the \emph{phenomenal} world. 

Kant's terminology shall be followed for the remainder of the article.  However this should neither be taken as endorsement of Kant's metaphysics nor as claim that the terms are employed in the same way.

To be perceptible does not mean to be perceived directly.  If~we scan a molecule with an atomic force microscope,
the properties thus observed are perceptible even though we are not observing the molecule directly with our naked eyes. The only limit to what kind of measurement device may be used to determine what is perceptible are the laws of Nature, not merely the currently available technology.  Also, to be perceptible does not imply to be perceived right now by some observer. 
For~instance, the far side of the Moon existed as part of the phenomenal world even before we had the technology that allowed us to observe~it.
Thus, we include as part of the perceptible world, not what is perceived now, but rather potential perceptions.

What is the relation between the noumenal world and the phenomenal world? What is perceptible must follow a process parallel to what exists.  As the noumenal world evolves, so does the phenomenal world. Any property that exists in the phenomenal world arises from a property in the noumenal world.   

We can represent the relation between the noumenal and the phenomenal worlds with the following diagram.

\begin{center}
 \begin{tikzpicture}

      \matrix (m) [matrix of math nodes,row sep=3em,column sep=4em,minimum width=2em,ampersand replacement = \&]
      { \textsf{Noumenal}_{1} \&  \textsf{Noumenal}_{2} \\ \textsf{Phenomenal}_{1} \& \textsf{Phenomenal}_{2} \\};
      \path[-stealth]
      (m-1-1) edge node [left] {$\varphi$} (m-2-1)
              edge node [above] {$U$} (m-1-2)
      (m-1-2) edge node [right] {$\varphi$} (m-2-2);

    \end{tikzpicture}
\end{center}

\noindent
Here, $\varphi$ is a mapping that determines the state of the phenomenal world in function of the state of the noumenal world. We~refer to a  state of the noumenal world as a \emph{noumenal state} and 
to a state of the phenomenal world as a \emph{phenomenal state}.  Any phenomenal state arises from at least one noumenal state. Thus $\varphi$ is \emph{surjective}.

The left part of the picture illustrates the following idea: when the noumenal world is in state $\textsf{Noumenal}_{1}$,  it has a corresponding phenomenal state $\textsf{Phenomenal}_{1}$, which is determined by applying $\varphi$ to $\textsf{Noumenal}_{1}$:
\[ \textsf{Phenomenal}_{1} = \varphi \! \left( \textsf{Noumenal}_{1} \right) . \]

The Law of Nature that determines the evolution of the noumenal world is represented by $U$ in this diagram.
The world is made of \emph{systems} that may or may not interact with each other.
A system is \emph{open} when its evolution can affect the rest of the universe,
and \emph{closed}
when it does not. When it is closed, nothing from the system can escape to the environment.

Consequently, we~can think of $U$ as a transformation that takes as input a noumenal state and outputs a new noumenal state.
This makes sense for the whole universe as well as for any closed system. In~the latter case,
the precise transformation applied would depend on  various factors including time and the state of the \emph{environment}, which is the part of the universe external to the closed system.

The upper part of the picture illustrates the following fact: if we apply a transformation $U$ to a closed system that was in state $\textsf{Noumenal}_{1}$, the new state of the system, $\textsf{Noumenal}_2$, is determined only by its previous state and the transformation. This can be summarized in the following equation:
\[ \textsf{Noumenal}_{2} =  U \star \textsf{Noumenal}_{1}  \, . \]
Note that we wrote ``$U \star \textsf{Noumenal}$'' above, rather than the more familiar form
``$U(\textsf{Noumenal})$'', because we should not think here of $U$ as a \emph{function},
but rather {``$\star$'' is an \emph{action} and $U$ \emph{acts} on the noumenal state according
to that action}. This allows us to use the same $U$ to act differently
on noumenal and phenomenal states by invoking different actions.

Finally, in the right part of the picture, we see that from the new  noumenal state, $\textsf{Noumenal}_{2}$, corresponds a phenomenal state, $\textsf{Phenomenal}_{2}$. Mathematically:
\[ \textsf{Phenomenal}_{2} = \varphi \! \left( \textsf{Noumenal}_{2} \right)  . \]

\subsection{Parallel process between noumenal and phenomenal worlds}\label{subsc:parallel}

A question arises naturally when considering a closed system: 
is~it possible to describe its phenomenal evolution without having recourse to the noumenal world?  Could we explain the evolution of 
phenomenal states only in terms of phenomenal states and transformations applied on them? Could we explain the evolution of state
$\textsf{Phenomenal}_{1}$ to state $\textsf{Phenomenal}_{2} $ through \mbox{transformation} $U$, without
invoking the underling  state $\textsf{Noumenal}_{1}$ giving rise to state $\textsf{Noumenal}_{2}$?

Mathematically, can the following equation be well-defined:
\[  \textsf{Phenomenal}_{2} =  U \cdot \textsf{Phenomenal}_{1}  \, , \]
where we have used ``$\cdot$'' to distinguish this action from the one on noumenal states,
which was denoted ``$\star$'' above?
We now argue that the answer is yes.

The equation will be well-defined whenever  any two possibly   distinct underlying noumenal states $\textsf{Noumenal}_{1}$ and $\textsf{Noumenal}_{1}^{*}$ giving rise to the same state $\textsf{Phenomenal}_{1}$ must also give rise to the same phenomenal states after evolution through the transformation~$U$. 
We now argue that this will be the case.  Suppose we have two states $\textsf{Noumenal}_{1}$ and $\textsf{Noumenal}_{1}^{*}$ giving rise to the same phenomenal
state  $\textsf{Phemonenal}_{1}$. This means that these noumenal states 
are completely indistinguishable
by any experiment whatsoever. This would include experiments that start by applying transformation~$U$.
It~follows that states  $\textsf{Noumenal}_{2} = U \star \textsf{Noumenal}_{1}$ and $\textsf{Noumenal}_{2}^{*} = U \star \textsf{Noumenal}_{1}^{*}$  must also be indistinguishable,
and therefore they cannot give rise to distinct phenomenal states.

Thus, a state $\textsf{Phenomenal}_{1}$, on which a transformation $U$ is applied, will evolve to a well-defined state $\textsf{Phenomenal}_{2}$\,.
Hence, we can  write:
\[ \textsf{Phenomenal}_{2} =  U \cdot \textsf{Phenomenal}_{1}  \, . \]

It follows that there are two ways in which the same state $\textsf{Phenomenal}_{2} $ can be reached from $\textsf{Noumenal}_{1}$.

\begin{itemize}
\item[{$\bullet$}] We can  apply first transformation $U$ to $\textsf{Noumenal}_{1} $ to {obtain} $\textsf{Noumenal}_{2}$  and then apply function $\varphi$ to $\textsf{Noumenal}_{2}$ and determine $\textsf{Phenomenal}_{2}${;}
\item[{$\bullet$}] or we could apply first $\varphi$ to $\textsf{Noumenal}_{1}$ to determine $\textsf{Phenomenal}_{1}$  and then we apply $U$ to $\textsf{Phenomenal}_{1}$ to {obtain} $\textsf{Phenomenal}_{2}$.
\end{itemize}

This allows us to update our diagram to illustrate the parallel process between the evolution of the noumenal world and the phenomenal world:

\begin{center}
 \begin{tikzpicture}

      \matrix (m) [matrix of math nodes,row sep=3em,column sep=4em,minimum width=2em,ampersand replacement = \&]
      { \textsf{Noumenal}_{1} \&  \textsf{Noumenal}_{2} \\ \textsf{Phenomenal}_{1} \& \textsf{Phenomenal}_{2} \\};
      \path[-stealth]
      (m-1-1) edge node [left] {$\varphi$} (m-2-1)
              edge node [above] {$U$} (m-1-2)
      (m-1-2) edge node [right] {$\varphi$} (m-2-2)
      (m-2-1) edge node [below] {$U$} (m-2-2);

    \end{tikzpicture}
\end{center}

This commuting diagram states  that the evolution of the phenomenal consequences of the noumenal world are the phenomenal consequences of the evolution of the noumenal world. 

Mathematically  $\varphi$ is a \emph{homomorphism} that verifies:
\[  \varphi \! \left( U \star \textsf{Noumenal}_{1} \right)  = U \cdot \varphi \! \left( \textsf{Noumenal}_{1} \right)  . \]

The evolution of the phenomenal world  is an \emph{epiphenomenon}: Understanding how the noumenal world evolves and the relation between the noumenal world and the phenomenal world is sufficient to describe  the evolution of the phenomenal world.   

\subsection{Leibniz's Principle}\label{sc:Leibniz}
The previous discussion was made necessary by the possibility of two different noumenal states that can give rise to the same phenomenal state. This possibility runs against a principle attributed to Leibniz~\cite{Indiscernibles,Leibniz},
which claims that if there is no possible perceptible difference between two objects, then these objects are the same, not superficially, but fundamentally. This would imply that if two phenomenal states are equal, then they must arise from the same noumenal state and hence that $\varphi$ is $\emph{injective}$.  As seen previously homomorphism $\varphi$ is surjective,  thus $\varphi$ would be an isomorphism between the noumenal world and the phenomenal world.  Given that an isomorphism is a mere rebranding of terms, the noumenal-phenomenal distinction would be unnecessary since  it  would provide no additional explanatory power.  
However, it can be proven that there are no local-realistic models of quantum theory that satisfy Leibniz's principle\cite{QMlocal}.

\subsection{Principles of Local-Realism}\label{sc:princrealism}
Now that we have discussed the nature of realism  we turn toward the nature of local-realism.
Intuitively, a local-realism can be described by  the following principles:

\paragraph{Principles of Local Realism} 
\begin{enumerate}
\item There is a real world which consists of various parts, called \emph{systems}.
\item A system may be decomposed into subsystems.
\item  Every system is a subsystem of the \emph{global system} consisting of the entire world.
\item At any given time, each system is in some state.
\item The state of a system  determines, and is determined by, the state of its subsystems. 
\item What is observable in a system is determined by the state of the system.
\item The state of the world evolves according to some  law.
\item  The evolution of the state of a system can only be influenced by the state
  of systems in its local neighbourhood.
\end{enumerate}

The notion of local neighbourhood depends on the underlying physics. In the case of relativity theory, this would mean that no systems  can influence each others if they are space-like separated.  This implies that no action on a system can have an influence on  another system at a speed faster than light. However, these principles are not restricted  to  theories in which general relativity holds.

\subsection{Separation beyond relativity theory} \label{subsc:spacelike}

One main lesson of the shift from Newtonian space-time to relativity is this:    there exists events that are not causally related to one another.  Provided Alice and Bob are space-like separated,   
it does not matter  whether Alice act before Bob in a certain reference frame or Bob act before Alice in another one. This is true because in reality, neither is acting before the other and neither is influencing the other in the slightest way.    A generalized characterization of local-realistic theories beyond relativity must be able to incorporate these ideas without reference to the speed of light.

In a local-realistic theory,
provided systems $A$ and $B$ are sufficiently far apart,
it  should not matter if we perform operation $U$ first on system $A$ and nothing on system $B$, followed by nothing on system $A$ and $V$ on system $B$, or if first we do nothing on system $A$ and $V$ on system $B$, followed by 
$U$ on system $A$ and nothing on system $B$.
In~either case, this simply corresponds to performing $U$ on $A$ and $V$ on~$B$.
This is illustrated by the following three circuits, inspired by quantum computational networks~\cite{Deutsch}, whose effect is identical.
\begin{center}
\begin{picture}(370,50) 
\put(0,40){\line(1,0){25}}
\put(25,30){\framebox(20,20){$U$}}
\put(45,40){\line(1,0){65}}
\put(0,10){\line(1,0){65}}
\put(65,0){\framebox(20,20){$V$}}
\put(85,10){\line(1,0){25}}
\put(150,40){\line(1,0){65}}
\put(215,30){\framebox(20,20){$U$}}
\put(235,40){\line(1,0){25}}
\put(150,10){\line(1,0){25}}
\put(175,0){\framebox(20,20){$V$}}
\put(195,10){\line(1,0){65}}
\put(300,40){\line(1,0){25}}
\put(325,30){\framebox(20,20){$U$}}
\put(345,40){\line(1,0){25}}
\put(300,10){\line(1,0){25}}
\put(325,0){\framebox(20,20){$V$}}
\put(345,10){\line(1,0){25}}
\end{picture}
\end{center}
Simply put, it is not meaningful to say that $U$ was done before $V$ or vice versa. In a  local-realistic theory, one key property is that  the state of system $A$ should only change when operation $U$ is done to it, independently of whether operation $V$ was done or not on system $B$.

\section{Local-realistic theories}

\subsection{Axioms and models}
The concept of a local-realistic theory will soon be formalized according to the principles enumerated of section  \ref{sc:princrealism}. 
This will be done through the creation of an explicit axiomatic system that characterize the notion of  a local-realistic theory.

Intuitively, an \emph{axiomatic system} is simply a list of mathematical propositions called \emph{axioms}  expressed in the language of predicate logic.  Here, for ease of reading, the axiomatic system of local-realistic theories and no-signalling theories is expressed through the use of English supplemented by some mathematical symbols, rather than in the recondite language of quantifiers, logical connective and predicates.  
A mathematical proposition in an axiomatic system also contains various  mathematical symbols that do not  refer to a specific mathematical-objects.  For example, the axioms of local-realistic theories  contain   symbols like  ``$\pi_{A}^{B}$'' or ``$\textsf{Noumenal-Space}^{A}$'' and the axioms do not tell us to what objects these symbols refer to.
A \emph{model} of an axiomatic system is an assignment to every  such  mathematical symbol to a specific mathematical object, in such  a way that all the axioms are true.  Formal definitions of axiomatic system and model are given in ref.\cite{structure}.
 
A mathematical theory is defined as  local-realistic if it  satisfies \emph{all} the axioms of local-realistic theories that shall be presented here.  
In contrast, a physical theory should not be seen as a purely mathematical theory, since it also contains non-mathematical entities related to its interpretation and to other factors.  The precise question of when a physical theory should be considered local-realistic shall be discussed in the conclusion of this paper.

\subsection{Systems}
We want to define the mathematical properties of systems, where systems describe meaningful parts of the world.  These systems can be combined in various way to give rise to different systems.  For example, if $A$ is a system, there will be a system $\overline{A}$, the complement of system $A$, consisting of the rest of the world.  As an other example, if $A$ and $B$ are systems, there will be a system $A \sqcup B$, the union of system $A$ and $B$ consisting of the parts of the world consisting of the part of the world belonging to  either of the two systems $A$ and $B$.

Mathematically,  systems will be represented as elements of a lattice of systems, which we define now.

\begin{definition}[Lattice of systems]  
A lattice of systems is a 6-tuple $\left( \mathcal{S}, \sqcup, \sqcap, \overline{\,\cdot\,}, S, 0 \right)$, where $\mathcal{S}$ is a set of elements called \emph{systems}.

There are two special systems:

\begin{enumerate} 
\item $S$, which is the \emph{whole system} being considered, hereinafter called the \emph{global system}. It~could be the entire universe.  Alternatively, it could be something much smaller, like a quantum computer or a single photon.  
\item The \emph{empty system} $0$, which contains no parts at all.
\end{enumerate}

Let $A$ and $B$ be systems, then:

\begin{enumerate}
\item There exists a system $A \sqcup B$, the \emph{union} of  $A$ and $B$.
\item There exists a system $A \sqcap B$ the \emph{intersection} of  $A$ and $B$.
\item  There exists a system $\overline{A}$, the \emph{complement} of $A$, which
is defined so that $A \sqcap \overline{A}=0$ and $A \sqcup \overline{A}=S$.
Intuitively, it is composed of all the parts of $S$ that are not in $A$.
\end{enumerate}

The transformations ($\sqcup$, $\sqcap$, $\overline{\,\cdot\,}$\,) and distinguished elements ($S$, $0$) behave like their usual set-theoretic counterparts. We use the slightly different notation of $\sqcup$, $\sqcap$, rather than $\cup$, $\cap$, to emphasize the fact that the transformations $\sqcup$ and $\sqcap$ are purely alge\-braic in nature.
Formally, a lattice of systems is a \emph{boolean lattice}.
\end{definition}

Given the definition of a  lattice of systems, we can state the first axiom.

\begin{axiom}[Systems] \label{ax:systems}
Associated to a local-realistic theory is a lattice of systems  $\left( \mathcal{S}, \sqcup, \sqcap, \overline{\,\cdot\,}, S, 0 \right)$.
\end{axiom}

We now introduce some terminology on systems.

\begin{definition}[Subsystem]
System $A$ is a \emph{subsystem} of a system~$B$, written \mbox{$A \sqsubseteq B $}, if $A \sqcap B = A$.
\end{definition}

\begin{definition}[Disjoint systems]
Systems $A$ and $B$ are \emph{disjoint} if they have no parts in common, i.e.~$A \sqcap B =0 $. 
\end{definition}

Note that the empty system is a subsystem of all systems, including itself, and that it is disjoint from all
systems, again including itself.

\begin{definition}[Composite system]
Let $A$ and $B$ be disjoint.  The system $A \sqcup B$ is a \emph{composite system}, composed of systems $A$ and $B$.  For convenience, we denote it by $AB$, rather than $A\sqcup B$.
\end{definition}

Since $\sqcup$ is commutative, we have that
\mbox{$AB = A \sqcup B = B \sqcup A = BA$}.

Since $\sqcup$ is also associative, we have
\mbox{$A \left( BC \right) = \left( A B \right) C$}
for any three mutually disjoint systems $A$, $B$ and $C$.
Thus, we shall simply write $ABC$ to denote the composite system consisting of $A$, $B$ and $C$.

\subsection{States}

We follow Kant's terminology~\cite{Kant-Brit, Kant}, and thus we distinguish two kinds of states in a system, as mentioned informally in Section~\ref{sc:concepts}. 

\begin{description}
\item[Noumenal State:] The noumenal state of a system is its complete description. It~describes the system as it is, rather than what can be observed about it, or known through sensory experience.  It describes not only what can be observed from a system, but also 
how the system can interact with other systems. It is a state of being.  It describes the system in itself, including parts that are not observable locally or at all. Another term used in quantum foundations literature to describe the noumenal state  would be the \emph{ontic} state~\cite{toy}.
\item[Phenomenal State:] The phenomenal state of a system is a complete description of what is \emph{locally} observable in {that} system.  The phenomenal state is a complete description of all the observable properties potentially accessible in a system.  It is what is observable in a system; not what is actually observed. The phenomenal state contains everything that can be observed through arbi\-trarily powerful technology. The only restriction on the technology is that it must abide by the laws of Nature. 
\end{description} 

The choice of terminology reflects the difference between appearance and reality. An~alter\-na\-tive
distinction, which is somewhat orthogonal, concerns the difference \mbox{between} existence and knowledge.  The theories of existence and of knowledge are dealt with in the respective branches of philosophy called \emph{ontology} and \emph{epistemology}. Following that path would have led to the distinction between the \emph{ontic} state of a system and its \emph{epistemic} state~\cite{toy}. The ontic state corresponds to what we have called the noumenal state.  However, the epistemic state corresponds to what is known about a system by some observer~\cite{LeifSpekA}, which might be subjective and vary from one observer to another~\cite{Qubism,LeifSpekB}.
It~should be emphasized that our phenomenal states are \emph{not} states of knowledge, neither are they relative to an observer.  
Hence, epistemic and phenomenal states are two fundamentally different notions.

This leads us to corresponding  axioms:

\begin{axiom}[Noumenal state space]
Associated to a system $A$ is  \emph{noumenal state space}, $\textsf{Noumenal-Space}^{A}$,  which is a non-empty set of \emph{noumenal states}.
\end{axiom}

 Particular noumenal states of $A$ will be  denoted $N^{A}$,  $ N^{A}_i$, $ N^{A}_1$, etc. 

Intuitively, the noumenal state space of a system consists of all possible noumenal states that it could theoretically be in.

\begin{axiom}[Phenomenal state space]\label{pheno}
Associated to a system $A$  is a \emph{phenomenal state space}, $\textsf{Phenomenal-Space}^{A}$, which is a non-empty set of \emph{phenomenal states}.
\end{axiom} 

Particular phenomenal states of $A$ will be denoted $\rho^{A}$, $\rho^{A}_{i}$, $\rho^{A}_{1}$, etc.

Intuitively, the phenomenal state space of a systems consist of all possible phenomenal state that it could theoretically be in.

\subsection{Transformations and actions}
We now want to describe how states can evolve in a local-realistic theory.
The evolution of a system happens through transformations that can be applied to the system.  The precise transformation that is applied  might depend on: the environment external to the global system which might be empty if the global system is the whole universe, on the dynamical laws of physics; or on time and other variables.

The next definition  characterizes the mathematical properties of a set of transformations  and how these transformations may be combined into to new transformations.

\begin{definition}[Monoid of transformations]  A monoid of transformations is a 3-tuple $ ( \textsf{Transformations}, \circ , I ) $, where $\textsf{Transformations}$ is a set whose elements are called \emph{transformations}. The set of transformations comes with a binary operator denoted~``$\circ$'', called the \emph{composition}, and $I$ is a transformation called the \emph{identity transformation}.  A monoid of transformations satisfies the following properties:
\begin{enumerate}
\item If $U$ and $V$ are transformations, $U \circ V$ is a transformation called the \emph{composite} of $U$ and $V$;
\item If $U$, $V$ and $W$ are transformations, $U \circ  \left( V \circ  W \right) = \left( U \circ V \right) \circ W $;
\item  For all transformations $U$,
\[I \circ U = U \circ I = U \, . \]
\end{enumerate}

Importantly, any transformation that might be physically applied to a system must belong to the set of transformations associated to that system, but the converse might not hold. In quantum theory, the transformations associated to a system consist of the unitary operations of the Hilbert space associated to that system, but the transformations physically possible might be   further restricted by other considerations. For example when working in quantum computing, we might declare that only unitary operations  that are finitely generated by the composition  of a certain set of quantum gates are physically realizable, while other unitary operations are merely meaningful mathematical transformations.    

\noindent
When there is no ambiguity, we shall omit the composition operator and write $UV$ instead of $U \circ V$.

\end{definition}

When the transformations in a monoid are invertible,  they define a group of transformations as follows.

\begin{definition}[Group of transformations]
A monoid of transformation $( \textsf{Transformations}, \circ , I ) $ is a \emph{group} of transformation if associated to every transformation $U$ in $\textsf{Transformations}$ there exist a transformation $V$, the \emph{inverse} of $U$, which has the property that  $U \, V = V \, U = I$.
\end{definition} 
In a group of transformation, the inverse of a transformation $U$ is unique and will be denoted $U^{-1}$.

Now:

\begin{axiom}[Transformations on a system]\label{op}
Associated to a system  $A$   is a monoid  of transformations, $(\textsf{Transformations}^{A} , \circ^{A}, I^{A} ) $.
\end{axiom}

Particular transformations on system $A$ are denoted $U^{A}$, $V^{A}$, etc. Also, $I^{A}$ denotes the identity transformation on system $A$.  When there is no ambiguity, we drop the superscript and write simply $U$, $V$ and~$I$.

Intuitively, the transformations associated to a system are the transformations that can be performed on the system, and if $U$ and $V$ are transformations on a system, the transformation $U \circ V$ can be implemented by first doing $V$, followed by doing $U$.

A transformation can act on a state to produce a new state, as follows:

\begin{definition}[Action]
Let  $(  \textsf{Transformations}, \circ, I ) $ be a monoid of transformations and $S$ be a set.
An~\emph{action} of the monoid of transformation on set $S$ is a binary operator \mbox{$\star : \textsf{Transformations} \times S \to S$} that satisfies, for all transformations $U$ and $V$ and for all element $s$ of the set $S$,
\begin{enumerate}
\item $ U \star \left( V \star s \right) = \left( UV \right) \star s $\,;
\item  $I \star s = s$.
\end{enumerate}
\end{definition}

The next two  axioms use the above definitions to express the fact that a transformation done on a system changes its underlying noumenal and phenomenal state.

\begin{axiom}[Noumenal action]\label{axiom:noumenalaction}
Associated to a system $A$, is a \emph{noumenal action}  denoted ``$\star^{A}$'', which is an action of the monoid of transformations of  the system on the set of noumenal states of the system.
\end{axiom}
When there is no ambiguity, we drop the superscript of the noumenal action.  For example, we write  $U \star N^{A}$ instead of $U \star^{A} N^{A}$.

Intuitively, if  system $A$ was in noumenal state $N^{A}$, and a transformation $U$ is done on it, its new noumenal state is $U \star N^{A}$.

\begin{axiom}[Phenomenal action]\label{actionpheno}
Associated to a system $A$, is a \emph{phenomenal action} ``$\cdot^{A}$'', which is an action of the monoid of transformations of the system on the set of phenomenal states of the system.
\end{axiom}

When there is no ambiguity, we drop the superscript of the phenomenal action. For example, we write $U \cdot \rho^{A}$ instead of $U \cdot^{A}\rho^{A} $.

Intuitively, if a system $A$ was in phenomenal state $\rho^{A}$, and a transformation $U$ is done on system $A$, its new phenomenal state is $U \cdot \rho^{A}$.

Sometimes, transformations can be characterized precisely in terms of how they act on a given set. 
This leads to the concept of a \emph{faithful} action.
\begin{definition}[Faithful action]\label{def:faithful}
Let $ \star $  be an action of a monoid of transformations  on a set $S$.  
The~action is \emph{faithful} if  transformations $U$ and $V$ are equal  whenever $U \star  s$ is equal to  $V \star s$ for every element $s$ of  $S$.  Thus, the action is faithful  if whenever two transformations act identically on all elements of $S$ they are equal.
\end{definition}

The next axiom states that if two transformations act identically all noumenal states, then they are equal.

\begin{axiom}[Noumenal faithfulness] \label{ax:faithnoume}
For every system, its associated  noumenal action is faithful.
\end{axiom}
It is both algebraically very useful and physically natural to impose noumenal faithfulness. However,
this axiom is not fundamental because any theory that verifies all axioms of local-realism except this one  can be transformed into a noumenally faithful local-realistic theory by replacing transformations by equivalence classes of transformations, in effect equating any two transformations that act identically on all possible noumenal states. For details, see  appendix~\ref{sc:faithfulness}.

\subsection{Noumenal-phenomenal homomorphism}
In a local-realistic theory, what is observable locally in a system is determined by the complete description of that system, in other words, the noumenal state of a system determines its phenomenal state.  If the noumenal state of a system evolves according to a transformation, its corresponding phenomenal state must evolve according to the same transformation.
 Mathematically the phenomenal state of a system will be determined by its underlying noumenal state, through a structure-preserving surjective map -- a noumenal-phenomenal epimorphism. But first:

\begin{definition}[Noumenal-phenomenal homomorphism] Let $A$ be a system and let $\phi$ be a mapping whose domain is the noumenal state space of $A$ and whose range is the phenomenal state space of $A$.  We~say that $\phi$ is a \emph{noumenal-phenomenal homomorphism} on system $A$ if, for any transformation $U$  of system $A$ and any noumenal state $N$ of  $A$,
\[  \phi \! \left(  U \star N \right) = U  \cdot \phi \! \left( N \right ) \, . \]  
\end{definition}

When no ambiguity can arise, we omit  the actions.  For example, the equation above becomes
\[  \phi \! \left(  U   N  \right) = U   \,  \phi \! \left(  N \right)  \,  . \]

\begin{definition}[Noumenal-phenomenal epimorphism] A surjective noumenal-phenomenal homomorphism on system $A$  is called a \emph{noumenal-phenomenal epimorphism} on system $A$.
\end{definition}

\begin{axiom}[Noumenal-phenomenal epimorphism]  \label{ax:noupheepi}
Associated with each system $A$ is  a noumenal-phenomenal epimorphism denoted~$\varphi^{A}$ called \emph{the} noumenal-phenomenal epimorphism of system $A$.
\end{axiom}
When there is no ambiguity, we write $\varphi$ \mbox{instead} of~$\varphi^{A}$. 

Intuitively, if a system $A$ is in noumenal state $N^{A}$, it has a corresponding phenomenal state $\rho^{A} = \varphi ( N^{A} )$. Furthermore, if a transformation $U$ is done on system $A$, its new noumenal state is $ U  N^{A}$, and its corresponding new phenomenal state is $U \rho^{A}$.  Lastly, every phenomenal state on a system arises from at least one noumenal state, since what is observable has an underlying reality.

The transformations act in a way that leads to the parallel evolution of the noumenal world and the phenomenal world, as explained intuitively in Section~\ref{subsc:parallel}.
This is best illustrated by the commuting diagram that we had seen previously:

\begin{center}
 \begin{tikzpicture}

      \matrix (m) [matrix of math nodes,row sep=3em,column sep=4em,minimum width=2em,ampersand replacement = \&]
      { N_{1} \&  N_{2} \\ \rho_{1} \& \rho_{2} \\};
      \path[-stealth]
      (m-1-1) edge node [left] {$\varphi$} (m-2-1)
              edge node [above] {$U$} (m-1-2)
      (m-1-2) edge node [right] {$\varphi$} (m-2-2)
      (m-2-1) edge node [below] {$U$} (m-2-2);

    \end{tikzpicture}
\end{center}

\subsection{Splitting and merging}\label{sc:splitandmerge}
As we explained informally in Section~\ref{sc:princrealism}, a local-realistic world can be decomposed into several parts.
These parts exist in such a way that the state of the whole determines the state of the parts, and conversely the state of the whole is fully determined by the state of the parts.
Note that the latter is \emph{not} the case with the standard description of  
quantum theory
since entangled states cannot be recovered from the state of their subsystems. 
This is the reason why the usual formalism does not provide a local-realistic model of quantum theory.

Given a composite system $AB$, its noumenal state $N^{AB}$ can be decomposed in two states:  A~noumenal state $N^{A}$, in the state space of $A$, and a noumenal state $N^{B}$, in the state space of~$B$. Informally, the state
of the parts is determined by the state of the whole.
For this purpose, we shall introduce formally with Axiom~\ref{ax:noumeproj} two \emph{projectors}, $\pi_{A}$ and $\pi_{B}$, which split a system in the following way:
\[ N^{A} = \pi_{A} \! \left( N^{AB} \right) ~~\text{and}~~
  N^{B} = \pi_{B} \! \left( N^{AB} \right) . \]

Furthermore, the two noumenal states $N^{A}$ and $N^{B}$ determine completely the noumenal state $N^{AB}$.
Informally, the state of the whole is determined by the state of the parts. 
For this purpose, we shall introduce formally with Axiom~\ref{ax:join} a \emph{noumenal product} ``$\odot$'', which merges the noumenal states of systems $A$ and~$B$ as follows:
\[ N^{AB} = N^{A} \odot N^{B} \, . \]

This is illustrated by the following diagram.
\begin{center}
    \begin{tikzpicture}
      \matrix (m) [matrix of math nodes,row sep=2em,column sep=0em,minimum width=1em,ampersand replacement = \&]
      {  \&  N^{AB} \&  \\ N^{A} \&  \& N^{B}  \\ \& N^{A} \odot N^{B}  = N^{AB}  \& \\ };
      \path[-stealth]
      (m-1-2) edge node [above left] {$\pi_{A}$} (m-2-1)
       (m-1-2)       edge node [above right] {$\pi_{B}$} (m-2-3)
      (m-2-3) edge node [below] {}  (m-3-2)
      (m-2-1) edge    (m-3-2);
    \end{tikzpicture}
\end{center}

Note that such a diagram would not be possible at the phenomenal level in quantum theory,
if~we replaced $N$ by~$\rho$ throughout.
Nevertheless, even though the phenomenal state $\rho^{AB}$ of composite system $AB$ cannot be determined
from the phenomenal states $\rho^{A}$ and $\rho^{B}$ of subsystems $A$ and~$B$,
it~\emph{can} be determined (as~well as $\rho^{A}$ and $\rho^{B}$) from the \emph{noumenal} states
$N^{A}$ and $N^{B}$ of $A$ and~$B$, as illustrated by the following diagram.

\begin{center}
    \begin{tikzpicture}
      \matrix (m) [matrix of math nodes,row sep=1em,column sep=0em,minimum width=1em,ampersand replacement = \&]
      {   N^{A} \&  \& N^{B}  \\ \& N^{A} \odot N^{B} = N^{AB} \& \\ \rho^{A} \& \& \rho^{B} \\ \& \rho^{AB} \& \\ };
      \path[-stealth]
      (m-1-1) edge   (m-2-2)
      (m-1-3) edge    (m-2-2)
      (m-1-1) edge node [left] {$\varphi$} (m-3-1)
      (m-1-3) edge node [right] {$\varphi$} (m-3-3)
      (m-2-2) edge node [right] {$\varphi$} (m-4-2);
    \end{tikzpicture}
\end{center}

\subsubsection{Noumenal and phenomenal projectors}

Let us now proceed formally.

The next axioms express the fact that the noumenal state of a system determines the noumenal state of any of its subsystems.

\begin{axiom}[Noumenal projector]\label{ax:noumeproj}
Associated to  all systems $A$ and $B$ such that $A$ is a subsystem of $B$, is a function denoted $\pi_{A}^{B}$, which is called the \emph{ noumenal projector} from system $B$ to system $A$.
Projector $\pi_{A}^{B}$ is a surjective function from the noumenal space of system $B$ to the noumenal space of system $A$.

Furthermore, for all systems $A$, $B$ and $C$, where $A$ is a subsystem of $B$, which is itself a subsystem of $C$ then the following relation hold between projectors:
\[ \pi_{A}^{B} \circ \pi_{B}^{C}= \pi_{A}^{C} , \]
where $\circ$ denotes the composition of functions.  
\end{axiom}

When there is no ambiguity, we shall omit the superscript  and we shall refer to $\pi_{A} $ as \emph{the}
noumenal projector  to system~$A$, regardless of the supersystem from which it is projected. 
For example, the previous equation will simply be written as:
\[  \pi_{A} \circ \pi_{B} = \mathrm{\pi}_{A} \, . \]

Intuitively if a  system $B$ is in noumenal state $N^{B}$, the noumenal state of a subsystem $A$ will be $N^{A} = \pi_{A} ( N^{B} ) $.

The next axiom expresses the fact that the phenomenal state of a system determines the phenomenal state of any of its subsystems.  

\begin{axiom}[Phenomenal projector]\label{tracepheno}
Associated to  all systems $A$ and $B$ such that $A$ is a subsystem of $B$, is a function  called the \emph{phenomenal projector} from system $B$ to system $A$. These phenomenal projectors follow the same requirements as noumenal projectors, as stated in axiom~\ref{ax:noumeproj},
\emph{mutatis mutandis}.  As an abuse of notation, we also \mbox{denote} the phenomenal projectors from system $B$ to system $A$ by $\pi_{A}^{B}$, since no ambiguity will be possible with the corresponding noumenal projector $\pi_{A}^{B}$.
\end{axiom}
Finally, when there is no ambiguity, we shall omit the superscript  and we shall refer to $\pi_{A} $ as \emph{the}
phenomenal projector  to system~$A$, regardless of the supersystem from which it is projected.

Intuitively if a  system $B$ is in phenomenal state $\rho^{B}$, the noumenal state of a subsystem $A$ will be $\rho^{A} = \pi_{A} ( \rho^{B} ) $.

Projectors are idempotent \footnote{\,By definition, $x$ is idempotent
under operation ``$\cdot$'' when $x \cdot x = x$.} under composition, i.e.:
\[ \pi_{A} \circ \pi_{A} = \pi_{A} \, . \]
This lead to the following two theorems.

\begin{theorem}
For any noumenal state $N^{A}$ of system $A$,
\[ \pi_{A} \! \left( N^{A} \right) = N^{A} \, .\]
\end{theorem}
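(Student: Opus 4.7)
The plan is to exploit the two properties established just before the statement: the composition rule $\pi_A^B \circ \pi_B^C = \pi_A^C$ whenever $A \sqsubseteq B \sqsubseteq C$, together with the requirement that every noumenal projector $\pi_A^B$ is a surjective map from the noumenal space of $B$ onto that of~$A$. The target identity, restored to the decorated notation, is $\pi_A^A(N^A) = N^A$, i.e.\ that the projector of $A$ onto itself acts trivially.

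First, I would specialize the composition rule to the case $A = B = C$, since the relation $A \sqsubseteq A \sqsubseteq A$ trivially holds. This gives
\[ \pi_A^A \circ \pi_A^A = \pi_A^A, \]
so $\pi_A^A$ is idempotent on the noumenal space of $A$ (this is exactly the idempotence noted immediately above the theorem).

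Second, I would invoke surjectivity: for any noumenal state $N^A$ there exists a noumenal state $M^A$ with $\pi_A^A(M^A) = N^A$. Applying $\pi_A^A$ once more and using idempotence, I get
\[ \pi_A^A(N^A) \,=\, \pi_A^A\bigl(\pi_A^A(M^A)\bigr) \,=\, \pi_A^A(M^A) \,=\, N^A, \]
which is the desired conclusion; dropping the redundant superscript recovers the stated form $\pi_A(N^A) = N^A$.

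There is no real obstacle here beyond making sure one is allowed to take $B = C = A$ in the composition rule and to apply surjectivity to $\pi_A^A$ itself; both are immediate from the axioms as written, so the argument is essentially a one-line consequence of idempotence plus surjectivity. The main conceptual point worth flagging is that, without the surjectivity requirement, idempotence alone would only force $\pi_A^A$ to be a retraction onto its image rather than the identity, so it is the combination of the two axioms that does the work.
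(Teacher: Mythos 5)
Your proof is correct and is essentially identical to the paper's: both derive idempotence of $\pi_A^A$ from the composition rule and then use surjectivity to write $N^A = \pi_A^A(M^A)$, concluding $\pi_A^A(N^A) = \pi_A^A(\pi_A^A(M^A)) = \pi_A^A(M^A) = N^A$. Your closing remark that surjectivity is what upgrades a mere retraction to the identity is a nice observation, but the argument itself matches the paper's step for step.
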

\begin{proof}
The surjectivity of $\pi_{A}^{A}$ implies that there must exist a state $N_{\alpha}^{A} $ in the noumenal space of system $A$ such that $ \pi_{A} \! \left( N_{\alpha}^{A} \right) = N^{A} $.
Therefore,
\[ \pi_{A} \! \left( N^{A} \right) =  \pi_{A} ( \pi_{A} ( N^{A}_{\alpha} ) ) = ( \pi_{A} \circ \pi_{A} )  \! \left( N_{\alpha}^{A} \right)   = \pi_{A} \! \left( N_{\alpha}^{A} \right) = N^{A} \, . \]
Of course, this $N_{\alpha}^{A}$ is none other than the original $N^{A}$ since
$\pi_{A} \! \left( N_{\alpha}^{A} \right)  = N^{A}$ by definition of $N_{\alpha}^{A} $,
but also
$\pi_{A} \! \left( N_{\alpha}^{A} \right)  = N_{\alpha}^{A}$ by the theorem itself.
\end{proof}

\begin{theorem}
For any phenomenal state $\rho^{A}$ of system $A$,
\[ \pi_{A} \! \left( \rho^{A} \right) = \rho^{A} \, .\]
\end{theorem}
\begin{proof}
The proof is left to the reader.
\end{proof}

The next axiom expresses the consistency between the noumenal and phenomenal projectors.
\begin{axiom}[Relation between noumenal and phenomenal projectors] \label{ax:relnouphe}
For all systems  $A$ and $B $ such that $A$ is a subsystem of $B$, and all noumenal states $N^{B} $, the noumenal and phenomenal projections to system $A$ are related by the following commuting relation:
\[
\pi_{A} \! \left( \varphi\! \left( N^{B} \right) \right) = \varphi \! \left( \pi_{A} \! \left( N^{B} \right) \right) .
\]
Here the symbol $\varphi$ stands for $\varphi^{AB} $ on the left side of the equation, but for $\varphi^{A} $ on the right side.  Also the symbol $\pi_{A}$ stands for the phenomenal projection on the left side of the equation, but for the noumenal projection on the right side. 
\end{axiom}

The relation between the noumenal and phenomenal projectors is best visualized by the fact that the following diagram commutes.

\begin{center}
 \begin{tikzpicture}

      \matrix (m) [matrix of math nodes,row sep=3em,column sep=4em,minimum width=2em,ampersand replacement = \&]
      { N^{B} \&  N^{A} \\ \rho^{B} \& \rho^{A} \\};
      \path[-stealth]
      (m-1-1) edge node [left] {$\varphi$} (m-2-1)
              edge node [above] {$\pi_{A}$} (m-1-2)
      (m-1-2) edge node [right] {$\varphi$} (m-2-2)
      (m-2-1) edge node [below] {$\pi_{A}$} (m-2-2);

    \end{tikzpicture}
\end{center}

\subsubsection{Abstract trace}\label{sc:abstract:trace}
Quantum theory often mentions tracing out other systems. 
More generally, we can define an abstract trace from any projector.  For all disjoint systems $A$ and $B$, for all noumenal states $N^{AB}$ and all phenomenal states $\rho^{AB}$, we define
\[ \mathrm{tr}_{B} \! \left(N^{AB} \right) \isdef \pi_{A} \! \left( N^{AB} \right) ~~\text{and}~~
   \mathrm{tr}_{B} \! \left(\rho^{AB} \right) \isdef \pi_{A} \! \left( \rho^{AB} \right) . \]
Again, while both traces are different functions, they will be denoted with the same symbols since  no ambiguity can arise. 
The choice of working with projectors rather than traces stems from the fact that the notion of trace belongs to linear algebra only, whereas projectors are universal mathematical objects.  

\subsubsection{Noumenal Product}

We shall need the following definition to express the next axiom.

\begin{definition}[Compatible states]
Let $AB$ be a composite system, let $N^{A}$ and $N^{B}$  be noumenal states of system $A$ and $B$ respectively.  We say that $N^{A} $ and $N^{B}$ are \emph{compatible states} if there exist a noumenal state $N^{AB}$ of system $AB$ such that $N^{A} = \mathrm{tr}_{B} ( N^{AB} ) $ and $ N^{B} = \mathrm{tr}_{A} ( N^{AB} ) $.
\end{definition}

The next axiom  expresses the requirement  that the state of a system is determined by the state of its subsystems.

\begin{axiom}[Noumenal product]\label{ax:join}
Associated to all  pair of disjoint systems $A$ and $B$ is a transformation, the \emph{noumenal product}, denoted  ``$\odot_{ A, B  } $'',
\newcounter{fnjoin}%
\setcounter{fnjoin}{\thefootnote}%
such that for all noumenal states $N^{AB} $ of $AB$,  the following equation holds:
\[ N^{AB} =  \pi_{A} \! \left( N^{AB} \right) \odot_{ A, B  }   \pi_{B} \! \left( N^{AB} \right) \, . \]
Furthermore, $N^{A} \odot N^{B}$ is defined only  if states $N^{A}$ and $N^{B}$ are compatible.
\end{axiom}

When there is no ambiguity, we shall drop the subscript from the noumenal product and write simply $ \odot$ instead of $\odot_{ A, B  } $.

There is no corresponding axiom  for a   phenomenal product, and this  is the most profound distinction between the noumenal and phenomenal level.

\begin{theorem}\label{th:noumeproj}  When $N^{A}$ and $N^{B}$ are compatible:
\[ \pi_{A} ( N^{A} \odot N^{B} ) = N^{A} \]
and 
\[ \pi_{B} ( N^{A} \odot N^{B} ) = N^{B} \]
\end{theorem}
\begin{proof}
We shall prove that $\pi_{A} (  N^{A} \odot N^{B} ) = N^{A} $, the other statement is similar.
Given that $N^{A} $ and $N^{B}$ are compatible, there exist a noumenal $N^{AB}$ of system $AB$ such that $\pi_{A} ( N^{AB} ) = N^{A} $ and $\pi_{B} ( N^{AB} ) = N^{B}$. 
It follows from the axiom of the noumenal product that
\[ \pi_{A} (  N^{A} \odot N^{B} )  = \pi_{A} (   \pi_{A} ( N^{AB}  ) \odot \pi_{B} ( N^{AB} )  ) = \pi_{A} ( N^{AB} ) = N^{A} \,  . \] 
\end{proof}

\begin{theorem}\label{th:fundnoumeprod} Let $AB$ be a composite system,  let $N^{A}$, $N^{B}$ and $N^{AB}$ be noumenal states of systems $A$, $B$ and $AB$, respectively.
\[  N^{A} \odot N^{B} = N^{AB} \]
\[ \iff \]
\[ N^{A} = \pi_{A} ( N^{AB})  \; \text{ and } \; N^{B} = \pi_{B} ( N^{AB} ) \]
\end{theorem}
\begin{proof}
This theorem follows directly from Theorem \ref{th:noumeproj} and Axiom \ref{ax:join}.
\end{proof}
It follows from Theorem \ref{th:fundnoumeprod} that for  an arbitrary noumenal state $N^{AB}$ of composite system $AB$, there exist a unique noumenal state $N^{A}$   of $A$ and  a  unique noumenal state $N^{B} $ of $B$  such that $N^{AB} = N^{A} \odot N^{B}$.  Of course, these states are respectively $\pi_{A} ( N^{AB} ) $ and $\pi_{B} ( N^{AB} ) $.

\begin{theorem}[Commutativity of the noumenal product]\label{th:commu:join}
For any disjoint systems $A$ and~$B$,
\[ N^{A} \odot_{ (A, B ) }  N^{B} = N^{B} \odot_{ ( B, A)}  N^{A} \, .\]
\end{theorem}
\begin{proof}
\begin{align*}
& ~\! \left( N^{A} \odot_{ (A, B ) }  N^{B} \right)  \\
= & ~ \pi_{B} (  N^{A} \odot_{( A, B )}  N^{B} ) \odot_{ ( B, A ) } \pi_{A} ( N^{A} \odot_{ ( A, B ) } N^{B} )   \\
= & ~N^{B} \odot_{ ( B, A ) } N^{A} \, .  \\ \\[-7ex]
\end{align*}
\end{proof}

We are now interested in  showing that the noumenal product is associative, as demonstrated by Theorem \ref{th:assoc:join}.  To do we shall need  the four following lemmas.

\begin{lemma}\label{th:noumeprojasso1} 
 Let $N^{A}$, $N^{B}$ and $N^{C}$ be noumenal states of systems $A$, $B$ and $C$ respectively such that the product $N^{ABC} = ( N^{A} \odot  N^{B}) \odot N^{C} $ is  defined.  The following hold
\[ N^{A} \pi_{A} (  N^{ABC} )  \; \text{ and } \;  N^{B} = \pi_{B} (  N^{ABC} )  \; \text{ and } \;  N^{C} =  \pi_{C} ( N^{ABC} )  \, . \]
\end{lemma}
\begin{proof}
We shall prove that $N^{A} $ is equal to $\pi_{A} ( N^{ABC}  )  $, the other statements are similar and left to the reader.
\begin{align*}
 N^{A} &= \pi_{A} ( N^{A} \odot N^{B} ) \\
                               & = \pi_{A} ( \pi_{AB} ( (N^{A} \odot N^{B}) \odot N^{C} )) \\
 & = \pi_{A} ( \pi_{AB} ( N^{ABC} ) ) \\
& =  \pi_{A} ( N^{ABC} )  \\[-7ex]
\end{align*}
\end{proof}

\begin{lemma}\label{th:noumeprojasso2}  Let $N^{A}$, $N^{B}$ and $N^{C}$ be noumenal states of systems $A$, $B$ and $C$ respectively such that the product $N^{ABC} = N^{A} \odot ( N^{B}) \odot N^{C}) $ is  defined.  The following hold
\[ N^{A} =\pi_{A} (  N^{ABC} )  \; \text{ and } \;  N^{B} = \pi_{B} (  N^{ABC} )  \; \text{ and } \;  N^{C} =  \pi_{C} ( N^{ABC} )  \, . \]
\end{lemma}
\begin{proof}
The proof is similar to Lemma $\ref{th:noumeprojasso1} $ and is left to the reader.
\end{proof}

\begin{lemma}\label{th:noumepro4} Let $N^{ABC}$ be a noumental state of composite system $ABC$, then
\[ \pi_{A} ( N^{ABC} ) \odot \pi_{B} ( N^{ABC} )  = \pi_{AB} ( N^{ABC}  ) \, . \]
\end{lemma}
\begin{proof}
\begin{align*}
& \pi_{A} ( N^{ABC} ) \odot \pi_{B} ( N^{ABC} ) \\
 = & (\pi_{A} \circ \pi_{AB} )( N^{ABC} ) \odot ( \pi_{B} \circ \pi_{AB} ) (N^{ABC} ) \\
  = & (\pi_{A} ( \pi_{AB} ( N^{ABC} )   ) \odot  \pi_{B}  ( \pi_{AB} ( N^{ABC} )    ) \\
  = &   \pi_{AB} ( N^{ABC} ) \, .  \\[-7ex]                                                         
\end{align*}
\end{proof}

\begin{lemma}\label{th:noumeprojasso3}Let $N^{ABC}$ be a noumenal state of composite system $ABC$, let $N^{A} = \pi_{A} ( N^{ABC} ) $, $N^{B} = \pi_{B} ( N^{ABC}  )$, $N^{C}= \pi_{C}  ( N^{ABC} )$.  The following hold:
\[ ( N^{A} \odot  N^{B} ) \odot N^{C}  = N^{ABC} =  N^{A} \odot ( N^{B} \odot N^{C} )  \, . \] 
\end{lemma}
\begin{proof}
Importantly by Lemmas  \ref{th:noumeprojasso1} and \ref{th:noumeprojasso2}, both products $(N^{A} \odot N^{B} ) \odot N^{C} $ and $N^{A} \odot (N^{B} \odot N^{C} ) $ are defined.
  
We now show that $( N^{A} \odot  N^{B} ) \odot N^{C}$ is equal to   $ N^{ABC}$.  We leave it to the reader to verify that  $  N^{A} \odot ( N^{B} \odot N^{C} )$ is also equal to $N^{ABC}$.  Using Lemma \ref{th:noumepro4}, we can see that
\begin{align*}
 &  ( N^{A} \odot  N^{B} ) \odot N^{C}   \\
=& ( \pi_{A} ( N^{ABC} ) \odot \pi_{B} ( N^{ABC} )  ) \odot \pi_{C} ( N^{ABC} ) \\
= &  \pi_{AB} ( N^{ABC} ) \odot \pi_{C} ( N^{ABC} )  \\
= &                    N^{ABC} \, . \\[-7ex]
\end{align*}
\end{proof}

\begin{theorem}[Associativity of the noumenal product]\label{th:assoc:join}
Let $ABC$ be a composite system, let $N^{A}$, $N^{B}$, $N^{C}$ be noumenal states of systems $A$, $B$ and $C$ respectively.  The product \mbox{$ (N^{A} \odot N^{B} )\odot N^{C}$} is defined precisely when $N^{A} \odot  ( N^{B} \odot N^{C}) $ is defined, in which case,
\[ \left( N^{A} \odot N^{B} \right) \odot N^{C} = N^{A} \odot \left( N^{B} \odot N^{C} \right) \, . \]
\end{theorem}
\begin{proof}
The theorem follows immediately from Lemmas \ref{th:noumeprojasso1} , \ref{th:noumeprojasso2} and \ref{th:noumeprojasso3}.   
\end{proof}

It follows from the associativity of the noumenal product that  we can omit unnecessary parentheses and  write $N^{A} \odot N^{B} \odot N^{C} $ to denote the noumenal product of $N^{A}$, $N^{B}$ and $N^{C}$.

\begin{theorem}\label{th:assomain} Let $N^{A}$, $N^{B}$, $N^{C}$ and $N^{ABC}$ be noumenal states of systems $A$, $B$, $C$ and $ABC$, respectively.
The product   $ N^{A} \odot N^{B}  \odot N^{C}$ is defined and equal to  $N^{ABC}$ precisely whenever
\[ N^{A} = \pi_{A} ( N^{ABC})  \; \text{ and } \; N^{B} = \pi_{B} ( N^{ABC} ) \; \text{and} \; N^{C} = \pi_{C} ( N^{ABC} ) \, .  \]
\end{theorem}
\begin{proof}
This theorem follows immediately from Lemmas \ref{th:noumeprojasso1} and \ref{th:noumeprojasso3}.   
\end{proof}

It follows from Theorem \ref{th:assomain} that for  an arbitrary noumenal state $N^{ABC}$ of composite system $ABC$, there exist  unique noumenal states $N^{A}$ , $N^{B} $ and $N^{C}$ of respectively systems  $A$, $B$ and $C$ such that $N^{ABC}$ is equal to the product  $N^{A} \odot N^{B} \odot N^{C}$.  Of course, these states are respectively $\pi_{A} ( N^{ABC} ) $, $\pi_{B} ( N^{AB} ) $ and $\pi_{C} ( N^{ABC} ) $.

\subsection{Product of transformations}\label{sc:SEPO}
Intuitively, the next and last axiom of local-realitic theories tells us what happens to the noumenal state of  composite system $AB$, if a transformation $U$ is performed on system $A$ and a transformation $V$ is performed simultaneously on system $B$.

\begin{axiom}[Product of transformations]\label{ax:SEPO}
Associated to all disjoint systems $A$ and $B$, is  a \emph{product of transformations}, which we \mbox{denote} ``$\times_{A, B} $''. 
Given  transformations $U$ on system $A$ and $V$ on system $B$, $U \times V$, the \emph{product} of $U$ and $V$, is a transformation on system $AB$ that satisfies the following relation for any noumenal state $N^{AB} = N^{A} \odot N^{B}$ :
\[
\big( U \times_{A,B} V \big) \! \left( N^{A} \odot N^{B} \right) =  (U N^{A} )  \odot  ( V   N^{B} )  \,  .
\]
\end{axiom}
When there is no ambiguity, we drop the subscripts and write simply $\times$ instead of $\times_{A, B} $.    
Note that this equation defines $U \times V$ uniquely as the transformation satisfying the above equation because we had required the
noumenal action to be faithful; see Definition~\ref{def:faithful}. By using Theorem \ref{th:noumeproj} on the equation above,  it~follows that
\[ \pi_{A} \! \left( \left( U \times V \right) \left(  N^{A} \odot N^{B} \right) \right)= U N^{A} 
~~\text{and}~~
    \pi_{B} \! \left( \left( U \times V \right) \! \left( N^{A} \odot N^{B} \right) \right) = V N^{B} . \]
Thus, the new state of system $A$ is simply $U \, N^{A}  $, as it should.
Crucially, we see that the transformation $U$ performed on (possibly far-away) system $A$ has had abso\-lutely no effect on the noumenal state of system~$B$.

This concept is illustrated by the following commuting diagram.

\begin{center}
    \begin{tikzpicture}

      \matrix (m) [matrix of math nodes,row sep=3em,column sep=4em,minimum width=2em,ampersand replacement = \&]
      { N^{AB} \& \left( U \times V \right) \, N^{AB} \\ N^{A} \& U \, N^{A}   \\};
      \path[-stealth]
      (m-1-1) edge node [left] {$\pi_{A}$} (m-2-1)
              edge node [above] {$U \times V$} (m-1-2)
      (m-1-2) edge node [right] {$\pi_{A}$} (m-2-2)
      (m-2-1) edge node [below] {$U$} (m-2-2);

    \end{tikzpicture}

\end{center}

We now prove several properties of the product of transformations.
These proofs hinge upon the fact that if two transformations act identically on all noumenal states, then they are the same transformation,  by the  faithfulness of the noumenal action (Axiom \ref{ax:faithnoume}). Recall also that any state $N^{AB}$ can be represented as a product state $N^{AB} = N^{A} \odot N^{B}$ and that the state $N^{ABC}$ can be represented as a product $N^{ABC} =  N^{A} \odot \left( N^{B} \odot N^{C} \right) = \left( N^{A} \odot N^{B} \right) \odot N^{C}$.

\begin{theorem}\label{th:U1U2V1V2}
For any composite system $AB$, any transformations $U^{A}$ and $V^{A}$ on system $A$ and any transformations $U^{B}$ and $V^{B}$ on system $B$, 
\[ \left( U^{A} \times U^{B}  \right)  \left( V^{A} \times V^{B} \right)   =  \left( U^{A} V^{A} \right)  \times \left( U^{B} V^{B} \right)  \]
\end{theorem}
\begin{proof}
Consider arbitrary compatible noumenal states $N^{A}$ and $N^{B}$ for systems $A$ and~$B$.
\begin{align*} 
&   \left( \left( U^{A} \times U^{B} \right) \left( V^{A} \times V^{B} \right)  \right) \left( N^{A} \odot N^{B} \right) \\
=& \left( U^{A} \times U^{B} \right) \left(  \left( V^{A} \times V^{B} \right)  \left( N^{A} \odot N^{B} \right) \right) \\
= & \left( U^{A} \times U^{B} \right) \left( \left(  V^{A}   N^{A} \right) \odot \left( V^{B}   N^{B} \right) \right) \\
= & \left( U^{A} \left( V^{A}  N^{A}  \right) \right) \odot \left( U^{B} \left( V^{B}   N^{B}  \right) \right)  \\
= & \left( \left( U^{A} V^{A} \right)  N^{A} \right) \odot   \left (\left( U^{B} V^{B} \right)N^{B} \right)\\
= & \left( \left( U^{A} V^{A} \right) \times \left( U^{B} V^{B} \right) \right)  \left( N^{A}  \odot  N^{B} \right) \, . 
\end{align*}
It follows from noumenal faithfulness (Axiom \ref{ax:faithnoume}) that $\left( \left( U^{A} \times U^{B} \right) \left( V^{A} \times V^{B} \right)  \right)$ is equal to $\left( \left( U^{A} V^{A} \right) \times \left( U^{B} V^{B} \right) \right)$.
\end{proof}

\begin{theorem}\label{th:IAIBIAB}
For any composite system $AB$, 
\[ I^{A} \times I^{B}  = I^{AB}\]
\end{theorem}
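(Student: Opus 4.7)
The plan is to invoke faithfulness of the noumenal action on $AB$: it suffices to show that $I^A \times I^B$ and $I^{AB}$ act identically on every noumenal state of the composite system. Since the noumenal action on $AB$ is faithful (as required for every system), if the two operations agree on all $N^{AB}$, they must be equal as operations.

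First I would take an arbitrary noumenal state $N^{AB}$ and decompose it via the join product as $N^{AB} = N^A \odot N^B$, where $N^A = \pi_A(N^{AB})$ and $N^B = \pi_B(N^{AB})$. This decomposition is available for every noumenal state of $AB$ by the requirement stated in Section~\ref{sc:join}, and it is unique by Theorem~\ref{thm:uniquedec} (though uniqueness is not needed here, only existence). By the defining equation~(\ref{eq:pt}) of the direct product of operations, applied with $U = I^A$ and $V = I^B$, we obtain
\[
(I^A \times I^B)(N^A \odot N^B) = I^A(N^A) \odot I^B(N^B) = N^A \odot N^B = N^{AB},
\]
where the middle equality uses the defining property of the identity operations on $A$ and on $B$. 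On the other hand, $I^{AB}(N^{AB}) = N^{AB}$ by definition of $I^{AB}$. Hence $(I^A \times I^B)(N^{AB}) = I^{AB}(N^{AB})$ for every noumenal state $N^{AB}$.

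Applying faithfulness of the noumenal action on $AB$ then yields $I^A \times I^B = I^{AB}$. There is no real obstacle here; the only subtlety is making sure we are entitled to conclude equality of operations from equality of their actions on all noumenal states, which is precisely what Definition~\ref{def:faithful} and the noumenal-faithfulness requirement guarantee. This is the same pattern that the introduction to this subsection explicitly flags will drive the proofs of these properties of the product of operations.
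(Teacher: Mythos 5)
Your proof is correct and follows essentially the same route as the paper's: decompose the state via the join product, apply the defining equation of the direct product with $U = I^{A}$ and $V = I^{B}$, and conclude by faithfulness of the noumenal action. The only cosmetic difference is that you start from an arbitrary $N^{AB}$ and decompose it, whereas the paper starts from arbitrary compatible $N^{A}$ and $N^{B}$; these are equivalent.
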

\begin{proof}
Consider an arbitrary  noumenal state $N^{AB}  = N^{A} \odot N^{B}$ on disjoint systems $A$ and~$B$.
\begin{align*}
 & \left( I^{A} \times I^{B}  \right) \left( N^{A} \odot N^{B} \right)   \\
= & \, I^{A} \! \left( N^{A} \right) \odot I^{A} \! \left( N^{B} \right) \\
= & \, N^{A} \odot N^{B} \\
= & \, I^{AB} \! \left( N^{A} \odot N^{B} \right) \, . 
\end{align*}
It follows from noumenal faithfulness (Axiom \ref{ax:faithnoume}) that $I^{A} \times I^{B} $ is equal with $I^{AB}$.
\end{proof}

\begin{theorem} \label{th:commu:opprod}
For any composite system $AB$, and any transformation $U$ and $V$ on systems $A$ and $B$ respectively,
\[ U \times_{A,B} V = V \times_{B,A} U \]
\end{theorem}
\begin{proof}
Consider an arbitrary noumenal state \mbox{$N^{AB}=N^{A} \odot_{ A, B  }  N^{B}$} for system
\mbox{$AB=BA$}.
\begin{align*}
& \left( U \times_{  A,  B }  V \right) \left( N^{A} \odot_{A, B  }  N^{B} \right)  \\
= & \, ( U  N^{A} )  \odot_{ A,  B } (  V  N^{B} )  \\
= &  \, (  V  N^{B} ) \odot_{  B, A }  ( U  N^{A} ) \\
= & \left( V \times_{B, A }  U \right) \left( N^{B} \odot_{ B, A } N^{A}  \right) \\
= & \left( V \times_{ B, A }  U \right) \left( N^{A} \odot_{  A, B} N^{B}  \right) \, .
\end{align*}
It follows from noumenal faithfulness (Axiom \ref{ax:faithnoume}) that $\left( U \times_{  A,  B }  V \right) $ is equal with $\left( V \times_{    B, A }  U \right)$. 
\end{proof}

\begin{theorem}\label{th:assoc:opprod}
For any composite system $ABC$, for any transformations $U$, $V$ and $W$ on systems $A$, $B$ and $C$ respectively, 
\[ U \times \left( V \times W \right) = \left( U \times V \right) \times W  \, . \]
\end{theorem}
\begin{proof}
Consider an arbitrary noumenal state $N^{ABC} = N^{A} \odot ( N^{B} \odot N^{C}) = ( N^{A} \odot N^{B} ) \odot N^{C} $ for system $ABC$.
\begin{align*}
& \left( U \times \left(V \times W \right) \right) \left( N^{A} \odot ( N^{B} \odot N^{C} ) \right)  \\
= & \left(  U  N^{A}  \right)  \odot \left( \left( V \times W \right)  \! \left( N^{B} \odot N^{C} \right) \right) \\
= &  U  N^{A}  \odot  \left(  V  N^{B}   \odot W  N^{C}  \right) \\
= & \left( U  N^{A} \odot V N^{B}\right) \odot W  N^{C}  \\
= & \left ( \left(  U \times V \right) \! \left( N^{A} \odot N^{B} \right) \right) \odot W  N^{C} \\
= & \left( \left( U \times V \right) \times W \right) \! \left( ( N^{A} \odot N^{B} ) \odot N^{C} \right)  \, . 
\end{align*}
It follows from noumenal faithfulness (Axiom \ref{ax:faithnoume}) that $ \left( U \times \left(V \times W \right) \right) $ is equal to  $\left( \left( U \times V \right) \times W \right)$. 
\end{proof}

Since both $\odot$ and $\times$ are associative
{(Theorems~\ref{th:assoc:join} and~\ref{th:assoc:opprod})},
we can omit internal parentheses.  For example,
\[\left(  U \times V \times W \right) \left( N^{A} \odot N^{B} \odot N^{C} \right) = U  N^{A} \odot V  N^{B} \odot W  N^{B}  . \]

\paragraph{Mathematical note:}
The definition of product of transformation gives a \emph{direct product} in the usual algebraic sense;
  Had~we not required the noumenal action to be faithful,  we could have had various pathologies. For~instance, it could have happened that {even though} both $I^{AB} $ and $I^{A} \times I^{B} $ do nothing on any noumenal states, $I^{A} \times I^{B} $ is \emph{not} the neutral element of the monoid, only an element of the kernel of the noumenal action, contradicting Theorem~\ref{th:IAIBIAB}.

\subsection{No-signalling principle}\label{sc:NSP}
One important, albeit obvious, consequence of a theory being local-realistic is that it is not possible to send a signal from one system to another if there is no interaction between the two. 

Intuitively, no transformation performed on some system $A$ can have an instantaneous effect of any kind on a remote system~$B$. It~follows that no transformation performed on system $A$ can have an instantaneous \emph{observable} effect on system~$B$.
More precisely, when we perform a transformation $U$ on system $A$ and a transformation $V$ on system $B$, transformation $V$ has only affected the noumenal state of system $B$, without any influence on the noumenal state of system~$A$.  It follows that the phenomenal state of system $A$, which is a function of its noumenal state, is also unchanged.
This is formalized in the following theorem.

\begin{theorem}[No-Signalling Principle]\label{th:NSP}
Let $\rho^{AB}$ be a phenomenal state {of system~$AB$}\@. For all transformations $U$ on system $A$ and $V$ on system $B$,
\[
\pi_{A} \! \left( \left( U \times V \right) \,  \rho^{AB}  \right)  = U  \, \pi_{A} ( \rho^{AB} ) \, . 
\]
We call the equation above the \emph{no-signalling principle} because it means that no transformation $V$ applied on system $B$ can have a phenomenal (i.e.~observable) effect on a remote system~$A$. 
\end{theorem}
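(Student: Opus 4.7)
The plan is to lift the whole computation into the noumenal world, where the join product $\odot$ splits operations cleanly via Eq.~(\ref{eq:pt}), and then come back down by $\varphi$. By the hereinafter convention, $\rho^{AB}$ is compatible with some underlying universal state, so there exists $N^{AB}$ with $\rho^{AB}=\varphi(N^{AB})$, and moreover $N^{AB}=N^{A}\odot N^{B}$ where $N^{A}=\pi_{A}(N^{AB})$ and $N^{B}=\pi_{B}(N^{AB})$. Similarly $\rho^{A}=\varphi(N^{A})$.

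The first step is to push the operation $U\times V$ across $\varphi$: since $\varphi$ is a noumenal-phenomenal homomorphism and $U\times V$ is a legitimate operation on system $AB$, we have
\[ (U\times V)(\rho^{AB}) \;=\; (U\times V)(\varphi(N^{AB})) \;=\; \varphi\bigl((U\times V)(N^{AB})\bigr). \]
The second step is to commute the phenomenal projector $\pi_{A}$ with $\varphi$ using the relation established in Section~3.4.3, yielding
\[ \pi_{A}\bigl(\varphi((U\times V)(N^{AB}))\bigr) \;=\; \varphi\bigl(\pi_{A}\bigl((U\times V)(N^{AB})\bigr)\bigr). \]

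The third step is purely noumenal: by the defining property of the direct product of operations, Eq.~(\ref{eq:pt}),
\[ (U\times V)(N^{AB}) \;=\; (U\times V)(N^{A}\odot N^{B}) \;=\; U(N^{A})\odot V(N^{B}), \]
and applying the noumenal projector to this join product returns $U(N^{A})$ (this is exactly the decomposition property $\pi_{A}(X^{A}\odot X^{B})=X^{A}$ established for the join product). Finally, I would apply $\varphi$ to $U(N^{A})$ and invoke the noumenal-phenomenal homomorphism property one last time to obtain $\varphi(U(N^{A}))=U(\varphi(N^{A}))=U(\rho^{A})$, which is the desired right-hand side.

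There is no genuine obstacle here: the argument is a diagram chase through three commutation facts (homomorphism for the compound operation $U\times V$, commutation of $\pi_{A}$ with $\varphi$, and homomorphism for $U$ alone) together with the defining equation of the direct product. The only point requiring mild care is making sure the hereinafter convention on compatibility applies so that the lifts $\rho^{AB}=\varphi(N^{AB})$ and $N^{AB}=N^{A}\odot N^{B}$ with $\rho^{A}=\varphi(N^{A})$ can be invoked without further justification; the fact that $V$ is unconstrained in the final expression is precisely what makes the statement a genuine no-signalling property, and it drops out of step three because $V(N^{B})$ is annihilated by $\pi_{A}$.
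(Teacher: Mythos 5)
Your proposal is correct and follows essentially the same route as the paper's own proof: lift $\rho^{AB}$ to a noumenal state $N^{AB}$ via surjectivity of $\varphi$, push $U\times V$ through the homomorphism, commute $\pi_{A}$ with $\varphi$, reduce $\pi_{A}\bigl((U\times V)(N^{A}\odot N^{B})\bigr)$ to $U(N^{A})$ via Eq.~(\ref{eq:pt}), and map back down with $\varphi$. The only cosmetic difference is that you invoke the hereinafter compatibility convention where the paper cites surjectivity of $\varphi$ directly; these amount to the same justification.
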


\begin{proof}  
Let $N^{AB}$ be any noumenal state such that $\rho^{AB}=\varphi \big( N^{AB} \big)$.
Its existence is guaranteed from the surjectivity of $\varphi$.
\begin{align*}
 & \pi_{A} \left( \left( U \times V \right) \cdot   \rho^{AB}  \right) \\
=~ & \pi_{A} \! \left( \left( U \times V \right)  \cdot  \varphi \! \left( N^{AB} \right) \right)  \\
=~ & \pi_{A} \! \left(  \varphi \left( \left( U \times V \right) \star  N^{AB}  \right) \right) \\
=~ & \varphi \left( \pi_{A} \left( \left( U \times V \right) \star  N^{AB} \right) \right) \\
=~ & \varphi \! \left( U \star  \pi_{A} \!  \left(  N^{AB}  \right) \right) \\
=~ & U  \cdot  \varphi \! \left( \pi_{A}  \! \left(N^{AB} \right) \right)  \\
=~ &  U \cdot  \pi_{A}  \! \left( \varphi  \! \left( N^{AB} \right) \right) \\ 
=~ & U \cdot \pi_{A} \!  \left( \rho^{AB} \right)  \\[-7ex]
\end{align*}
\end{proof}

Thus, a theory is no-signalling if the following diagram commutes.
\begin{center}
    \begin{tikzpicture}

      \matrix (m) [matrix of math nodes,row sep=3em,column sep=4em,minimum width=2em,ampersand replacement = \&]
      { \rho^{AB} \& \left( U \times V \right) \rho^{AB}   \\ \rho^{A} \& U  \rho^{A}   \\};
      \path[-stealth]
      (m-1-1) edge node [left] {$\pi_{A}$} (m-2-1)
              edge node [above] {$U \times V$} (m-1-2)
      (m-1-2) edge node [right] {$\pi_{A}$} (m-2-2)
      (m-2-1) edge node [below] {$U$} (m-2-2);

    \end{tikzpicture}

\end{center}

Our statement of the no-signalling principle in Theorem~\ref{th:NSP} is a generalization of the usual notion, 
which is typically formulated in terms of the probability distribution of
\emph{observation} outcomes (which would be called \emph{measurements} in quantum theory)
made in two of more remote locations.

In~the simplest bipartite instance, consider two observers Alice and Bob, who share some
system~$AB$. They dispose of sets of operations $\textsf{Operations}^{A}$ that Alice can apply
on~$A$ and $\textsf{Operations}^{B}$ that Bob can apply on~$B$.
These operations may include observations that can produce outcomes $x$ and $y$,
respectively. Denote by $\prob^{A}[ x | U]$ the probability that operation $U$
applied by Alice on system $A$ produces outcome~$x$.
Similarly, $\prob^{AB}[ x , y  | U \times V ]$ is the joint probability that
Alice observes~$x$ and Bob \mbox{observes} $y$ if they perform operations $U$ and $V$
on systems $A$ and $B$, respectively.

Assume now that Alice and Bob are sufficiently far apart that their systems can be considered
separated and non-interacting in the sense of Section~\ref{subsc:spacelike}.
The usual no-signalling principle~\cite{Gisin} says that,
for  
any possible outcome $x$ when any operation $U$ is performed by Alice on system~$A$,
\mbox{$\prob^{A}[ x  |U ]$} can be well-defined as
\[ \prob^{A}[x | U ] ~=~ \sum_{y} \prob^{AB}[ x , y | U \times V ] \, , \]
regardless of the choice of~$V$ that Bob may make.
In other words, the observable outcome at Alice's of performing some operation $U$ on system $A$
must not depend on which operation $V$ is performed by Bob on remote system~$B$,
including no operation at~all.\footnote{\,Formally, we need the identity operation to be among Bob's choices for ``including no operation at~all'' to hold.} 
It~follows that Bob cannot signal information to Alice by a clever choice of which operation to apply (or not) to his system.

\section{No-signalling theories}\label{sc:NSOT}
As previously seen, in local-realistic theories, there is a noumenal world and a phenomenal world, and these two worlds follow a parallel process.  Let us now  consider purely phenomenal theories, where there is a phenomenal world, but no underlying noumenal world. 
 More specifically, we are interested in \emph{no-signalling} theories,
in which no transformation performed on a system $A$ has any \emph{observable} effect on a disjoint system~$B$\@.
A no-signalling theory differs from a local-realistic theory in that it does not come with  noumenal state spaces. Therefore, there are no  noumenal-phenomenal epimorphisms, noumenal projectors, noumenal actions, nor a noumenal product. The latter is the essential missing ingredient in a no-signalling theory,
it has no phenomenal counterpart  which  allows  to describe the phenomenal  state of a composite system as a function of the phenomenal states of its subsystems.

We now introduce the explicit axioms that define a  no-signalling  theory.

The next five axioms appear previously as the  Axioms \ref{ax:systems}, \ref{pheno},  \ref{op}, \ref{actionpheno} and \ref{tracepheno} of a local-realistic theory.

\begin{axiom}[Systems] 
Associated to a no-signalling theory is a lattice of systems  $\left( \mathcal{S}, \sqcup, \sqcap, \overline{\,\cdot\,}, S, 0 \right)$.
\end{axiom}

\begin{axiom}[Phenomenal state space]
Associated to a system $A$  is a \emph{phenomenal state space}, $\textsf{Phenomenal-Space}^{A}$, which is a non-empty set of \emph{phenomenal states}.
\end{axiom}

\begin{axiom}[Transformations on a system]
Associated to a system  $A$   is a monoid  of transformations, $(\textsf{Transformations}^{A} , \circ^{A}, I^{A} ) $.
\end{axiom}

\begin{axiom}[Phenomenal action]
Associated to a system $A$, is a \emph{phenomenal action} ``$\cdot^{A}$'', which is an action of the monoid of transformations of the system on the set of phenomenal states of the system.
\end{axiom}

\begin{axiom}[Phenomenal projector]\label{sc:projectors}
Associated to  all systems $A$ and $B$ such that $A$ is a subsystem of $B$, is a function denoted $\pi_{A}^{B}$, which is called the \emph{ phenomenal projector} from system $B$ to system $A$.
Projector $\pi_{A}^{B}$ is a surjective function from the phenomenal space of system $B$ to the phenomenal space of system $A$.

Furthermore, for all systems $A$, $B$ and $C$, where $A$ is a subsystem of $B$, which is itself a subsystem of $C$ then the following relation hold between projectors:
\[ \pi_{A}^{B} \circ \pi_{B}^{C}= \pi_{A}^{C} , \]
where $\circ$ denotes the composition of functions.
\end{axiom}

We follow the same notational conventions introduced earlier for local-realistic theories.   For example, when there is no ambiguity, we omit  writing the phenomenal action and  we shall omit the superscript of the phenomenal projector and write simply $\pi_{A}$ instead of $\pi_{A}^{B}$.

\fussy

In local-realistic theories, the product of transformations was completely determined at the noumenal level by Axiom \ref{ax:SEPO} in Section~\ref{sc:SEPO}, which \mbox{depended} crucially on the existence of the noumenal product,
a notion that does not exist at the phenomenal level. 
Nevertheless, this induced a phenomenal meaning to the product of transformations through the noumenal-phenomenal epimorphism.

This lead to the following different axiom for the product of transformation.

\begin{axiom}[Product of transformations] \label{ax:nosigprod}
Associated to all disjoint systems $A$ and $B$, is  a \emph{product of transformations}, denoted ``$\times_{A,B}$''. However, when there is no ambiguity, we drop the subscript and write simply $\times$ to denote the product of transformation.
For all   transformations $U$ on system $A$ and $V$ on system $B$, $U \times V$, the \emph{product} of $U$ and $V$, is a transformation on system $AB$
 that satisfies the following properties:
\begin{enumerate}
\item 
\emph{No-signalling principle}.
For all  composite system $AB$, for all transformations $U$ and $V$ on  $A$ and $B$, respectively, and for all
phenomenal state $\rho^{AB}$ of ~$AB$, 
\[ \pi_{A} \! \left(  \left(  U \times V \right) \!  \rho^{AB}  \right)  = U \pi_{A} \! \left( \rho^{AB} \right)  . \]
\item
\emph{Associativity}.
For all transformations $U$, $V$ and $W$ on mutually disjoint systems, 
\[ U \times \left( V \times W \right) = \left( U \times V \right) \times W \, . \]
Since there is no ambiguity, we shall omit the parentheses and simply write $U \times V \times W$.
\item
For all composite system $AB$, and all transformations $U^{A}$, $V^{A}$ on and all transformations $U^{B}$, $V^{B}$ on $B$,
 \[ \left( V^{A} \times V^{B} \right) \left( U^{A} \times U^{B} \right) = ( V^{A} U^{A} ) \times (V^{B} U^{B} ) \, . \]
\item
For all composite system $AB$,
\[ I^{A} \times I^{B} = I^{AB} \, . \]
\item
For all composite system $AB$ and for all transformations $U^{A}$ and $U^{B}$ on  $A$ and $B$, respectively,
\[ U^{A} \times_{A,B} U^{B} = U^{B} \times_{B,A} U^{A} \, . \]
\end{enumerate}
\end{axiom}
\sloppy

These  five  properties did not have to be imposed on the product of transformations of
 local-realistic theories  because they were derived as  Theorems \ref{th:NSP}, \ref{th:assoc:opprod}, \ref{th:U1U2V1V2}, \ref{th:IAIBIAB} and~\ref{th:commu:opprod}, \mbox{respectively}.

This concludes the statement of the axioms of a no-signalling theory.

\fussy
\subsection{Postulates leading toward local-realism }

 Clearly all local-realistic theories are also no-signalling theories, since the axioms of no-signalling theories include only axioms and theorems of local-realistic theories.

We now state three additional postulates that  guarantee that any given no-signalling theory that verify them can be given a local-realistic model.  The construction of the local-realistic model for the given no-signalling theory  will be done in Section \ref{sc:main}.    We shall also state a last postulate that is not necessary to guarantee that a no-signalling theory has a local-realistic model, but that nevertheless lead to an easier construction  of the local-realistic theory.

Intuitively, the next postulate states that if nothing is done to system~$A$ and nothing to system~$B$,
then nothing is done to system~$AB$.

\begin{postulate}[Separation] \label{pos:separa}
Let $ABC$ be a composite system, let $V^{BC}$, $V^{AC}$ be transformations on systems  $BC$, $AC$ and respectively.  If  $I^{A} \times V^{BC} = I^{B} \times V^{AC} $ then there exist a transformation $V^{C}$ on system $C$ such that
\[ I^{A} \times V^{BC} =  I^{B} \times V^{AC} = I^{AB} \times V^{C} \, .  \]
\end{postulate}

While this postulate is  not derivable from the axioms of local-realistic theories,  it might be possible to find  an axiom that is natural for local-realistic theories  that implies this postulate as a theorem.

\begin{postulate}[Invertible dynamics] \label{pos:reverdy}
For every system $A$, its associated monoid of transformation $(\textsf{Transformations}^{A} , \circ^{A}, I^{A} ) $ form a group.
\end{postulate}

The previous postulate has nothing to do with local-realistic and no-signalling theories.  However, it is very natural since all modern theories of physics satisfy it.  
Importantly, this postulate says nothing about whether when a transformation   can be physically applied to a system, its inverse can also be physically applied.  What matters for the purpose of constructing  a local-realistic model  is the mathematical existence of the  inverse of a given transformation, not its physical realization. 
 To give an example, in quantum theory, the transformations associated with a system correspond to unitary operation of its Hilbert space.  It might be that when working in quantum theory a certain unitary operation can be physically applied to a system, while its inverse  cannot.  Yet, the postulate would  still apply.

The following theorem holds in an arbitrary no-signalling theory with invertible dynamics.
\begin{theorem} Let $A$ and $B$ be disjoint systems.
For any transformation $U$ on system~$A$ and $V$ on system $B$,
\[ \left( U \times V \right)^{-1} = U^{-1} \times V^{-1} \, .\]
\end{theorem}
\begin{proof}  From
\begin{align*}
  & \left( U \times V \right) \left( U^{-1} \times V^{-1} \right) \\
  & \left( U U^{-1} \right) \times \left( V V^{-1} \right) \\
  & \left( I^{A} \times I^{B} \right) \\
  & = I^{AB}  \, ,
\end{align*}
it follows from  the definition of the inverse that $(U \times V )^{-1} = U^{-1} \times V^{-1} $ .
\end{proof}

To state the last postulate, we need to define what  it means for two transformations $U$ and $V$ on a system $A$ to make absolutely no phenomenal differences.  We could be tempted to say that if for  every phenomenal state $\rho^{A}$ of $A$:
\[ U  \rho^{A}  = V \rho^{A}  \, , \]
then applying either $U$ or $V$ can make no phenomenal difference whatsoever.
However,  $U$ and $V$  could act identically on every phenomenal states $\rho^{A}$ of system $A$ and yet the transformations $U$ and $V$ could still lead to a phenomenal difference at some level. The  \href{ https://arxiv.org/abs/1710.01380v1}{first version} of this article on arXiv contained this error, which was found by Dominic Mayers, who suggested the following  definition to capture the notion that two transformations can make no phenomenal differences.

\begin{definition}[Phenomenal Equivalence] \label{def:equivpheno}
Let $A$ be a system. Le $U$ and $V$ be transformations on system $A$. 
We say that  $U$ and $V$ are \emph{phenomenally  equivalent} if
for all disjoint system $B$, for all phenomenal state $\rho^{AB}$ on composite system $AB$, for all transformation $W$ on system $B$
\[ ( U \times I^{B} ) \rho^{AB} = ( V \times I^{B}) \rho^{AB} \]
\end{definition}
It is easy to verify that phenomenal equivalence on a system is an equivalence relation on the transformations of that system.

Note that if $U$ and $V$ are phenomenally equivalent, then for every phenomenal state $\rho^{A}$ of system $A$, phenomenal state $U \rho^{A} $ is equal to phenomenal state $V \rho^{A}$ as proved by  Theorem \ref{th:eqloc} in appendix \ref{sc:faithfulness}.

The following postulate state that whenever two transformations can make absolutely no phenomenal differences, then they are the same transformation.

\begin{postulate}[Phenomenal faithfulness] \label{pos:globalfaith}
For all systems $A$,  all transformations $U$ and $V$ on system $A$  are equal whenever they  are phenomenally equivalent.
\end{postulate}

Importantly,  the phenomenal faithfulness postulate is not fundamental because any no-signalling theory  can be made phenomenally faithful by replacing transformations by equivalence classes of transformations, in effect equating any two transformations that can make no phenomenal difference whatsoever, as shown in appendix ~\ref{sc:faithfulness}.
However, it is algebraically very useful and natural to work only with no-signalling theories that satisfy phenomenal faithfulness   from the outset. 
The reason why we had not required phenomenal faithfulness in local-realistic theories is that the postulate could be incompatible with the underlying noumenal world But here, only the phenomenal world is given and we are free to build our own noumenal world to explain~it. This gives us latitude to make the phenomenal action satisfy the  phenomenally faithful postulate if needed, before we proceed to building the noumenal world, whose action will then be automatically faithful as shown by Theorem \ref{th:autofaithful}.

The following  theorem  applies in  a  theory that  satisfies all axioms of a local-realistic theory with the possible exception of the faithfulness of the noumenal action.

\begin{theorem} \label{th:nouphenoequ} 
Let $U$ and $V$ be transformations on a system $A$.   If for all noumenal states $N^{A}$ of  $A$, noumenal state $U N^{A}$ is equal to $V N^{A}$ then $U$ and $V$ are phenomenally equivalent.
\end{theorem}
\begin{proof}
Let $B$ be a system disjoint from system $A$, let $\rho^{AB}$ be an arbitrary  phenomenal state of composite system $AB$.  We must show that $ (U \times I^{B} ) \rho^{AB} =  ( V \times I^{B} ) \rho^{AB} $. By the surjectivity of the noumenal-phenomenal epimorphism, there exist  $N^{AB} = N^{A} \odot N^{B}$ a noumenal state of  $AB$ such that $\varphi (N^{AB} ) = \rho^{AB}$. 
We have that 
\begin{align*}   
( U \times I^{B} )   \rho^{AB}  & = ( U \times I^{B} ) \, \varphi ( N^{AB} )  \\
                                       &  =(  U \times I^{B} )  \,    \varphi ( N^{A} \odot N^{B} ) \\
                                    &  = \varphi (  ( U \times I^{B} )  ( N^{A} \odot N^{B} ) )  \\
                                    & = \varphi (  U  N^{A}  \odot    I^{B} N^{B} ) \\
                                    & = \varphi (  U  N^{A}  \odot  N^{B}    ) \, .  
\end{align*}
It follows that  $ (U \times I^{B} )  \rho^{AB} $ is equal to $ \varphi ( U N^{A} \odot N^{B}  )$. Similarly $( V \times I^{B} )  \rho^{AB} $ is equal to $\varphi ( VN^{A} \odot N^{B} ) $. 
Since by supposition $U N^{A} $ is equal to $V N^{A}$, it follows that 
\begin{align*}   
( U \times I^{B} )   \rho^{AB}                                      & = \varphi (  U  N^{A}  \odot  N^{B}    ) \\ 
                                                                                 & = \varphi ( V N^{A} \odot N^{B} ) \\
                                                                                & =  ( V \times I^{B} ) \rho^{AB}  \, .
\end{align*}
\end{proof}

The following  theorem  applies in  a  theory that  satisfies all axioms of a local-realistic theory with the possible exception of the noumenal faithfulness axiom.
\begin{theorem}\label{th:autofaithful}
If  phenomenal faithfulness  holds (Postulate \ref{pos:globalfaith}), then the noumenal action is  faithful on all systems.
\end{theorem}
\begin{proof}
Let $A$ be a system, let $U$ and $V$ be transformations on system $A$ such that $U  N^{A}$ is equal to  $V  N^{A}$ for all noumenal states $N$ of system $A$, we need to show that $U$ is equal to $V$.   By Theorem \ref{th:nouphenoequ}, transformations  $U$ and $V$ are phenomenally equivalent. It follows by by the phenomenal faithfulness  that $U$ is equal to $V$.
\end{proof}

The previous theorem is  the only place in the construction of a local-realistic model from a given no-signalling theory where phenomenal faithfulness is necessary.

We now introduce a last postulate which is not necessary to guarantee that a no-signalling theory can be given a local-realistic model, however, it shall lead to a simpler construction of the local-realistic model. It states that every phenomenal state of the global system is reachable from any other phenomenal state of the global system.

\begin{postulate}[Global transitivity] \label{pos:globaltrans}
For every phenomenal state $\rho$ and $\rho'$ of the global system, there exist a transformation $U$ on the global system such that \mbox{$U \rho = \rho'$}.
\end{postulate}

\subsection{Quantum theory is a no-signalling  \mbox{theory}}\label{sc:QTisNSOT}

We will soon prove that any  no-signalling theory that satisfies  postulates   \ref{pos:separa}, \ref{pos:reverdy}  and \ref{pos:globalfaith}  has a local-realistic model.   But before proving this, let us how this apply to unitary quantum theory.

First, quantum theory is a no-signalling theory. To~see this, we must define the
various components of a \mbox{no-signalling}  theory in quantum-theoretical terms.   

In unitary quantum theory,
\begin{enumerate}
\item The phenomenal state of a quantum system is its density operator.
\item The transformations acting on those states are unitary operation of the appro\-pri\-ate dimension.
\item transformation $U$ acts on phenomenal state $\rho$ by producing $U \rho \, U^{\dagger}$.
\item The phenomenal projector $\pi_A$ on system $A$ is the usual tracing out of the rest of the universe
(see Section~\ref{sc:abstract:trace}).
\item The product of transformations is the usual tensor product of unitary operations.
\end{enumerate}
For the complete details see  ref.\cite{QMlocal}.  Once the details are properly done, it is elementary to verify that all the axioms of a no-signalling theory are satisfied.  

Are the three additional postulates leading toward local-realism satisfied?

Postulate  \ref{pos:separa} holds, but its proof is slightly technical and is given in appendix  \ref{sc:postuquant}.
Invertible dynamics (Postulate \ref{pos:reverdy})  is satisfied, since the unitary operations on a system form a group.

In quantum theory, two unitary operations are phenomenally equivalent precisely whenever they  differ by a multiplicative complex number of unit norm, known as an \emph{irrelevant phase factor}. 
In~order to satisfy  phenomenal faithfulness  (Postulate \ref{pos:globalfaith}), we need to equate all unitary operations differing by a phase factor. 
The clean mathematical way to do this is given in appendix ~\ref{sc:faithfulness}.  
It follows that once this  modification is done,  quantum theory is a no-signalling theory satisfying the three postulates that guarantee the existence of a local-realistic model.  

Finally, if we restrict ourselves to the case where the phenomenal space of the global system consists of pure states only, the global transitivity (Postulate \ref{pos:globaltrans}) is also satisfied.

\section{Construction of a local-realistic model from  a no-signalling model}\label{sc:main}

We now proceed to construct  a local-realistic
model from an arbitrary given an  no-signalling model that  satisfies postulates \ref{pos:separa}, \ref{pos:reverdy} and  \ref{pos:globalfaith}.  

Specifically, we are given a lattice of systems  $\left( \mathcal{S}, \sqcup, \sqcap, \overline{\cdot} , S, 0 \right) $. 

For~each system $A$, we are given:
\begin{enumerate}
\item A phenomenal state space, $\textsf{Phenomenal-Space}^{A}$,
\item  A group of transformations  $( \textsf{Transformations}^{A}, \circ, I^{A} )$, since the invertibility postulate holds.
\item  A phenomenal action  ``$ \cdot^{A}$''.
\end{enumerate}
We~are also given a~phenomenal projector $\pi_{A}^{B}$ for each pair of systems $A$ and $B$, where $A$ is a subsystem of $B$.
Furthermore, for each pair of disjoint systems $A$ and $B$, we are given
a product of transformations $\times_{A,B}$.

Our~goal is to assign for  each system $A$:
\begin{enumerate}
\item  A noumenal state space composed of noumenal states, $\textsf{Noumenal-Space}^{A}$,
\item  A noumenal action, ``$\star^{A}$'',
\item A noumenal-phenomenal epimorphism,  $\varphi^{A}$.
\end{enumerate}

For each pair of systems $A$ and $B$, where $A$ is a subsystem of $B$, we must assign a noumenal projector $\pi_{A}^{B}$. For each pair of disjoint systems $A$ and $B$,
we  need to assign  a noumenal product $\odot_{A,B}$. Furthemore, we must verify that all axioms of a local-realistic theory are satisfied.

The proof will be done in two steps, first we will construct a model for a no-signalling theory that is globally transitive (Postulate \ref{pos:globaltrans}).  The only place in that construction where the postulate shall be assumed is in Theorem  \ref{th:episurj}.  We shall then show how to
construct a local-realistic model for a  no-signalling theory that is not necessarily globally transitive in section \ref{sc:dropping}.

\begin{definition}[Fundamental equivalence relation]
For every system $A$, we define an  equivalence relation ``\,$\sim_{A}$\,'' on  the transformations of the global system  as follows.  For any transformation $W$ and $W'$ on the global system,  $ W \sim_{A} W' $ holds whenever there exist a transformation $V$ on the complement of system  $A$  such that  
\[ W = \left( I^{A} \times V \right) W' \, . \]
\end{definition}

\begin{theorem} The relation  $\sim_{A}$ is an equivalence relation on the transformations of the global system.
 \end{theorem}
\begin{proof}
We need to show that $\sim_{A}$ is reflexive, symmetric and transitive.
 \begin{description}
 \item[The relation $\sim_{A}$ is reflexive:] Let $W$ be a transformation on the global system $S$, we show that \mbox{$W \sim_{A} W$}.  First, we see that
 \[   W = I^{S} \, W = \left(  I^{A} \times I^{ \overline{A}} \, \right) W   \, . \]
Since $W = \left( I^{A} \times I^{\overline{A}} \right) W$,
it follows that  $ W \sim_{A} W $.
\item[The relation $\sim_{A}$ is symmetric:] Let $W$ and $W'$ be transformations on the global system such that $ W \sim_{A} W' $, we show that $W' \sim_{A} W$.  By definition, there exists a transformation $V$ on the complement of system $A$  such that $ W = \left( I^{A} \times V \right) W' $.
Therefore,
\begin{align*}
W' =& ~ I^{S} W' \\
   =& \left( I^{A} \times I^{\overline{A}} \, \right) W' \\
   =& \left( I^{A}I^{A} \times V^{-1} V \right) W' \\
   =& \left( I^{A} \times V^{-1} \right) \left( I^{A} \times V \right) W' \\
   =& \left( I^{A} \times V^{-1} \right) W \, .
\end{align*}
Since $W' = \left( I^{A} \times V^{-1} \right) W$
and $V^{-1}$ is a transformation on the complement of system $A$,
it follows that  $ W' \sim_{A} W $.

\item[The relation $\sim_{A} $ is transitive:]  Let $W$, $W'$, $W''$ be transformations on the global system such that $ W \sim_{A} W' $ and $ W' \sim_{A} W'' $, we show that $W \sim_{A} W''$. By~definition, there exist transformations 
$V$ and $V'$ on the complement of system $A$ such that \mbox{$ W = \left( I^{A} \times V \right) W' $} and \mbox{$ W' = \left( I^{A} \times V' \right) W'' $}.
Therefore,
\begin{align*}
W =& \left(I^{A} \times V \right) W' \\
  =& \left( I^{A} \times V \right) \left( I^{A} \times V' \right) W'' \\
  =& \left( I^{A} \times V V' \right) W'' \, .
 \end{align*}
Since $W = \left( I^A \times V V' \right) W''$
and $V V' $ is a transformation on the complement of system $A$,
it follows that \mbox{$ W \sim_{A} W'' $}.  \\[-7ex]
\end{description}
\end{proof}

We used the invertibility postulate in the previous proof to show that the  relation $\sim_{A}$ is symmetric. The only other   places in which
the validity of our construction hinges upon that postulate is to prove that the noumenal product is well-defined in Theorem~\ref{th:welldefined} below.

Any equivalence relation gives rise to equivalence classes.

\begin{definition}[Fundamental equivalence classes] Let $A$ be a system, 
for any transformation $W $ on the global system $S$, we define the \emph{equivalence class} of $W$ with respect to $A$ to~be the set of transformations on the global system that are equivalent to $W'$ with respect to system $A$. Formally,
\[ \left[ W \right]^{A} \isdef \left\{ W' \in \textsf{Transformations}^{S} \colon W' \sim_{A} W \right\} \, . \]
\end{definition}

\begin{definition}[Noumenal states]
 The noumenal space for system $A$ is defined as 
\[ \textsf{Noumenal-Space}^{A} \isdef \left\{  \left[ W \right]^{A} \colon W \in \textsf{Transformations}^{S}  \right\}  . \]
\end{definition}
Thus, an arbitrary noumenal state $N^{A}$ is equal to $ \lbrack W \rbrack^{A}$ for some transformation $W$ on the global system $S$.

\begin{definition}[Noumenal projectors]
Let $A$ be a subsystem of $B$\@. The noumenal projector of a noumenal state $\left[ W \right]^{B}$ onto system $A$ is defined by
\[ \pi_{A}^{B}  \!\left( \left[ W \right]^{B} \right) \isdef  \left[ W \right]^{A} \, . \]
For such a definition to make sense, we need to verify that it does not depend on the choice of representative for the equivalence class. The following theorem establishes that our noumenal projectors are well defined.
\end{definition}
\begin{theorem} Let $A$ be a subsystem of $B$\@.
For any transformations  $W, W'$ on the global system, if  $ W' \sim_{B} W $ is verified, then $W' \sim_{A} W $.
\end{theorem}
\begin{proof}
Suppose $W' \sim_{B} W$.  By definition of $\sim_{B}$, there exists a transformation  $V$ on the complement of system $B$  such that $W' = ( I^{B} \times V ) W$.  Let system $C$ be equal to $ \overline{A} \sqcap B$,  systems $A$ and $C$ are disjoint, moreover composite system $AC$ is equal to $B$.  It follows that
\begin{align*} W' &= \left( I^{B} \times V \right) W \\
                        & = \left( I^{AC} \times V \right) W \\
                      & = \left(  \left( I^{A} \times I^{C} \right) \times V  \right) W \\
& = \left( I^{A} \times \left( I^{C} \times V \right) \right) W \, .
\end{align*}
Since $W' =  \left( I^{A} \times \left( I^{C} \times V \right) \right) W $, it follows that $W' \sim_{A} W$.
\end{proof}

The definition of noumenal projectors implies  that a noumenal state of system $B$ is sent to a noumenal state of subsystem $A$ by noumenal  projector $\pi_{A}^{B} $.  Thus we are justified in writing:
\[ \pi_{A}^{B} \colon \textsf{Noumenal-Space}^{B} \to \textsf{Noumenal-Space}^{A} \, . \]

When there is no ambiguity, we omit the superscript and write $ \pi_{A} $ instead of $\pi_{A}^{B}$.

It remains to verify that what we have defined satisfies the Axiom  \ref{ax:noumeproj}  for noumenal projectors.   This is a consequence of  the  next  two theorems.

\begin{theorem}\label{th:surjpartialtrace}
Noumenal projector $\pi_{A}^{B}  \colon \textsf{Noumenal-Space}^{B} \to \textsf{Noumenal-Space}^{A}$ is surjective.
\end{theorem}
\begin{proof} An arbitrary noumenal state of system $A$  is equal to  $\lbrack W \rbrack^{A} $ for some transformation $W$ on the global system $S$. Since  $ \lbrack W \rbrack^{B}$ is a noumenal state of system $B$ and $ \pi_{A}^{B} ( \lbrack W \rbrack^{B} ) = \lbrack W \rbrack^{A} $, it follows that $\pi_{A}^{B}$ is surjective.
\end{proof}

\begin{theorem}
 If $A$ is a subsystem of $B$ and $B$ is subsystem of $C$, for any arbitrary noumenal state $\lbrack W \rbrack^{C}$ of system $C$, we have 
\[  \pi_{A} \!\left( \left[ W \right]^{C} \right) = \! \left(\pi_{A} \circ \pi_{B} \right) \!\left( \left[ W \right]^{C} \right) \, . \]
\end{theorem}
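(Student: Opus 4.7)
The statement to prove is that $\pi_A\!\left([W]^C\right) = (\pi_A \circ \pi_B)\!\left([W]^C\right)$ whenever $A \sqsubseteq B \sqsubseteq C$. My plan is to chase through the definitions; no machinery beyond the defining formula $\pi_X\!\left([W]^Y\right) \isdef [W]^X$ should be needed, because the class $[W]^X$ is defined from $W$ alone and does not remember which supersystem $Y$ it was obtained from.

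The steps, in order, are as follows. First, I apply the definition of the composition on the right-hand side to reduce $(\pi_A \circ \pi_B)\!\left([W]^C\right)$ to $\pi_A\!\left(\pi_B\!\left([W]^C\right)\right)$. Second, I apply the defining formula for the noumenal projector, which is legitimate because $B \sqsubseteq C$, to obtain $\pi_B\!\left([W]^C\right) = [W]^B$. Third, I apply the same defining formula once more, this time using $A \sqsubseteq B$, to obtain $\pi_A\!\left([W]^B\right) = [W]^A$. Fourth, I observe that directly from the definition, using $A \sqsubseteq C$ (which follows from the transitivity of the subsystem relation in the boolean lattice), the left-hand side equals $\pi_A\!\left([W]^C\right) = [W]^A$. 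Both sides thus reduce to the same class $[W]^A$, and equality follows.

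The only thing that could cause concern is whether the definition of $\pi_A$ applied to a class $[W]^Y$ actually depends on the ``host'' supersystem $Y$. However, inspection of the definition shows that $\pi_A$ simply relabels the equivalence relation from $\sim_Y$ to $\sim_A$, while the representative $W$ is always drawn from $\textsf{Operations}^S$ regardless of which supersystem we are projecting from. Since the previous well-definedness theorem has already established that $W' \sim_Y W \implies W' \sim_A W$ when $A \sqsubseteq Y$, the output $[W]^A$ is independent of representative. Consequently, no further compatibility argument is needed, and the proof reduces to a single chain of equalities. The main (and only) obstacle is notational rather than mathematical: making sure that the reader recognizes that each application of $\pi$ is being made in a legal subsystem configuration, so that the defining formula is indeed applicable at each step.
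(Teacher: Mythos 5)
Your proof is correct and follows essentially the same route as the paper's: both arguments reduce each side to the class $\left[ W \right]^{A}$ by repeated application of the defining formula \mbox{$\pi_{X}\!\left(\left[ W \right]^{Y}\right) = \left[ W \right]^{X}$}, with well-definedness already secured by the preceding theorem. Your additional remarks on transitivity of $\sqsubseteq$ and on the independence of $\left[ W \right]^{A}$ from the host supersystem are accurate but not needed beyond what the paper already establishes.
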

\begin{proof}
 \begin{align*} \pi_{A} \!\left( \! \left[ W \right]^{C} \right) = & ~ \! \left[ W \right]^{A} \\
 = &  ~  \pi_{A} \!\left( \left[ W \right]^{B} \right) \\
 = &  ~ \pi_{A} \! \left( \pi_{B} \! \left[ W \right]^{C} \right) \\
 = & ~ \! \left( \pi_{A} \circ \pi_{B} \right) \left( \left[ W \right]^{C} \right) \\[-7ex]
 \end{align*}
\end{proof}

\begin{definition}[Noumenal action]
Let $A$ be a system and let $U$ be a transformation that acts on system $A$\@. 
We~define the {noumenal} action of transformation $U$ on system $A$ by
\[  U  \star^{A} \left[ W \right]^{A}  \isdef \left[ \left( U \times I^{\overline{A}} \, \right) W \right]^{A}  \, . \]
Again, for such a definition to make sense, we need to verify that it does not depend on the choice of representative for the equivalence class. The following theorem proves that the noumenal actions are well defined.
\end{definition}

\begin{theorem}
For any system~$A$, transformation $U$ acting on $A$, and for any  transformations $W$ and $W'$ on the global system, we have that
$W' \sim_{A} W$ implies that   $\left( U \times I^{\overline{A}} \, \right) W' \sim_{A} \left( U \times I^{\overline{A}} \, \right)  W $.
\end{theorem}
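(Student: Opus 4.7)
The plan is to unpack the definition of $\sim_{A}$ and then exploit requirement~3 of the product of operations (Theorem~\ref{th:U1U2V1V2} in the local-realistic setting), which states that the product of operations on disjoint systems distributes over composition. Since we only need to produce a single witness $V' \in \textsf{Operations}^{\overline{A}}$ certifying the equivalence on the right-hand side, the argument should be a direct algebraic manipulation with no need to invoke the group structure.

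First I would assume $W' \sim_{A} W$, so by definition there exists $V \in \textsf{Operations}^{\overline{A}}$ such that $W' = \left( I^{A} \times V \right) W$. Left-multiplying by $U \times I^{\overline{A}}$ gives
\[ \left( U \times I^{\overline{A}} \, \right) W' = \left( U \times I^{\overline{A}} \, \right) \! \left( I^{A} \times V \right) W = \left( U I^{A} \times I^{\overline{A}} V \right) W = \left( U \times V \right) W \, , \]
where the second equality is the distributivity requirement~3 applied across the disjoint systems $A$ and $\overline{A}$, and the third uses that $I^{A}$ and $I^{\overline{A}}$ are neutral elements in their respective monoids of operations.

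Next I would factor the same product in the opposite order:
\[ \left( I^{A} \times V \right) \! \left( U \times I^{\overline{A}} \, \right) W = \left( I^{A} U \times V I^{\overline{A}} \, \right) W = \left( U \times V \right) W \, . \]
Combining the two displayed equations yields
\[ \left( U \times I^{\overline{A}} \, \right) W' = \left( I^{A} \times V \right) \! \left( U \times I^{\overline{A}} \, \right) W \, , \]
and since $V \in \textsf{Operations}^{\overline{A}}$, this is exactly the witnessing equation for $\left( U \times I^{\overline{A}} \right) W' \sim_{A} \left( U \times I^{\overline{A}} \right) W$.

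There is no real obstacle here: the only subtlety is making sure one uses requirement~3 (distributivity of $\times$ over composition) in both directions, so that the same $V$ can be used to ``commute past'' the operation $U \times I^{\overline{A}}$. Notably, no appeal to invertibility is needed at this step, in contrast with the proof of symmetry of $\sim_{A}$ itself.
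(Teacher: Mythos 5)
Your proof is correct and follows essentially the same route as the paper's: expand $\left( U \times I^{\overline{A}} \right) W'$ using the witness $V$, apply requirement~3 in both directions to rewrite $U \times V$ as $\left( I^{A} \times V \right)\left( U \times I^{\overline{A}} \right)$, and read off the same $V$ as the witness for the right-hand equivalence. Your closing remark that no invertibility is needed here, unlike for symmetry of $\sim_{A}$, is also consistent with the paper, which invokes the group structure only for symmetry and for Theorem~\ref{th:welldefined}.
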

\begin{proof}
Suppose $W' \sim_{A} W$, by definition of $\sim_{A}$, there exists a transformation $V$ on the complement of system $A$ such that
\mbox{$ W' =   \left( I^{A} \times V \right)   W $}. Therefore,
\begin{align*}
 \left( U \times I^{\overline{A}} \, \right) W'  = & ~ \left( U \times I^{\overline{A}} \, \right) \left( I^{A} \times V \, \right) W \\
 = & ~ \left( U \times V \right) W \\
 = & ~ \left( I^{A} \times V \right) \left( U \times I^{\overline{A}} \, \right) W \, . \\[-7ex]
 \end{align*}
\end{proof}

The  definition  implies directly that the action of a transformation on a noumenal state is  indeed a noumenal state. It follows that for all system $A$, we are justified in writing that 
\[ \star^{A} \colon \textsf{Transformations}^{A} \times \textsf{Noumenal-Space}^{A} \to \textsf{Noumenal-Space}^{A} \, . \]
When there is no ambiguity we may omit the superscript of the action symbol and write simply $U \star \lbrack W \rbrack^{A}$ instead of $U \star^{A} \lbrack W \rbrack^{A}$. Further when there is no ambiguity, we may also omit writing the  $\star$ symbol altogether, for example writing $U \lbrack W \rbrack^{A}$ instead of $U \star \lbrack W \rbrack^{A} $ .

It remains to prove that we defined a proper action at the noumenal level and thus that Axiom \ref{axiom:noumenalaction} is  satisfied. 
This is the purpose of the following two theorems. 

\begin{theorem} For any system $A$,
for all transformations $U$ and $V$ on system $A$, and for all noumenal state $ \lbrack W \rbrack^{A}$ on system $A$, 
\[ \left( V U \right)  [ W ]^{A}  = V \! \left( U \, [ W ]^{A}  \right) \, .  \]
\end{theorem}
\begin{proof} The theorem follows from
\begin{align*}
\left( V U \right)  [ W]^{A}
 = & ~ \left[ \left( \left( VU \right) \times I^{\overline{A}} \, \right) W \right]^{A} \\
 = & ~ \left[ \left( V \times I^{\overline{A}} \, \right) \left( U \times I^{\overline{A}} \, \right) W \right]^{A} \\
 = & ~ V   \left[ \left( U \times I^{\overline{A}} \, \right) W \right]^{A}  \\
 = & ~ V \! \left( U   [ W ]^{A}   \right) \, .  \\[-7ex]
 \end{align*}
\end{proof}

\begin{theorem}
For any system $A$, for all  noumenal state $\lbrack W \rbrack^{A}$, 
\[ I^{A}  [ W ]^{A}  = [ W ]^{A}  \, . \]
\end{theorem}
\begin{proof} The theorem follows from
\[   I^{A}  [ W]^{A}  
 =  [ ( I^{A} \times I^{\overline{A}} \, ) W ]^{A} 
= \left[ I^{S} \, W \right]^{A}
 =   [ W ]^{A} \, .  \]
\end{proof}

The fidelity of the noumenal action (Axiom \ref{ax:faithnoume}) is a direct consequence of the phenomenal faithfulness of the theory (Postulate \ref{pos:globalfaith}) as proven by Theorem \ref{th:autofaithful}.

\paragraph{Noumenal product.}
We~are now ready to define the noumenal product.
First note that two noumenal states $N^{A}$ and $N^{B}$ of disjoint systems $A$ and $B$ are compatible if and only if there exists some transformation $W$ on the global system  such that $N^{A}$ is equal to  $[W]^{A}$ and $N^{B}$ is equal to~$[W]^{B}$.

\begin{definition}[Noumenal product]
Let $\left[ W \right]^{A}$ and $\left[ W \right]^{B}$ be compatible  noumenal states  of  disjoint systems $A$ and~$B$.
Their noumenal product is defined as follow
 \[ \left[ W \right]^{A} \odot_{A,B} \left[ W \right]^{B} \isdef \left[ W \right]^{AB} \, .  \]
\end{definition}

We immediately drop the subscripts in the noumenal product and write $\odot$ instead of $\odot_{A,B}$.
\noindent
Once again, for such a definition to make sense, we need to verify that it does not depend on the choice of representatives for the equivalence classes. The following theorem establishes that this is the case.

\sloppy
\begin{theorem}\label{th:welldefined}
For any transformations $W$ and $W'$ on the global system, and any composite system $AB$, 
if \mbox{$W \sim_{A} W'$} and \mbox{$W \sim_{B} W'$}, then \mbox{$W \sim_{AB} W'$}.
\end{theorem}
\begin{proof}
Let $W'$ be such that \mbox{$W \sim_{A} W'$} and \mbox{$W \sim_{B} W'$},
and let $C=\overline{AB}$. 
This means that there exist $V^{BC}$ and $V^{AC}$ such that
\mbox{$W= \! \left( I^{A} \times V^{BC} \right) W'$} and
\mbox{$W= \! \left( I^{B} \times V^{AC} \right) W'$}.
Multiplying on the right  by the inverse of $W'$, it follows that \mbox{$I^{A} \times V^{BC} = I^{B} \times V^{AC}$}.
Postulate \ref{pos:separa} imposes the
existence of a transformation $V^{C}$ such that $I^{AB} \times V^{C} = I^{A} \times V^{BC} $.  It follows that \mbox{$W = \! \left( I^{AB} \times V^{C} \right) W'$} and
therefore \mbox{$W \sim_{AB} W'$}.
\end{proof}

\fussy

The noumenal product satisfies  Axiom \ref{ax:join}  as proved  by the next theorem.

\begin{theorem}\label{th:joinworks}
Let $AB$ a composite system, and let $ \lbrack W \rbrack^{AB} $ be an arbirary noumenal state of system $AB$. It follows that
\[ \pi_{A} ( \lbrack W \rbrack^{AB} ) \odot \pi_{B}  (  \lbrack W \rbrack^{AB} )  = \lbrack W \rbrack^{AB} \, .  \]
\end{theorem}
\begin{proof} The theorem follows from
\[ \pi_{A} ( \lbrack W \rbrack^{AB} ) \odot \pi_{B} (  \lbrack W \rbrack^{AB} ) = \lbrack W \rbrack^{A} \odot \lbrack W \rbrack^{B} = \lbrack W \rbrack^{AB} \, .  \]
\end{proof}

The next theorem proves that the product of transformations  satisfies Axiom \ref{ax:SEPO}.

\begin{theorem}\label{th:main}
Let $\left[ W \right]^{A}$ and $\left[ W \right]^{B}$ be noumenal states for disjoint systems $A$ and~$B$,
and let $U$ and $V$ be transformations that can act on these  systems, respectively.  It follows that
\[ \left( U \times V \right) \!  \left( \left[ W \right]^{A} \odot \left[ W \right]^{B} \right) = U \!  \left[ W \right]^{A} \, \odot \, V \!  \left[ W \right]^{B} \, .  \]
\end{theorem}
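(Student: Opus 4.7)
The plan is to chase both sides through the definitions until they meet at a common representative of $\textsf{Operations}^{S}$, using the preceding Lemma as the key lever. Both sides should reduce to the equivalence class $\bigl[(U \times V \times I^{\overline{AB}})W\bigr]^{AB}$.

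First I would expand the left-hand side. By the definition of the join product, $[W]^{A} \odot [W]^{B} = [W]^{AB}$, and by the definition of the noumenal action of operation $U \times V$ on system $AB$ (whose complement is $\overline{AB}$),
\[ (U \times V)\bigl([W]^{AB}\bigr) = \bigl[((U \times V) \times I^{\overline{AB}})\,W\bigr]^{AB} = \bigl[(U \times V \times I^{\overline{AB}})\,W\bigr]^{AB}, \]
where the second equality uses associativity of the product of operations.

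Next I would handle the right-hand side with the Lemma just proved. The Lemma says that for any system $A$, operation $U$ on $A$, and operation $V'$ on $\overline{A}$, we have $U([W]^{A}) = [(U \times V')W]^{A}$. Choosing $V' = V \times I^{\overline{AB}}$, which is a legitimate operation on $\overline{A} = B \sqcup \overline{AB}$, I obtain $U([W]^{A}) = [(U \times V \times I^{\overline{AB}})W]^{A}$. Symmetrically, taking the complement-side extension $U \times I^{\overline{AB}}$ acting on $\overline{B} = A \sqcup \overline{AB}$, the Lemma gives $V([W]^{B}) = [(U \times V \times I^{\overline{AB}})W]^{B}$. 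Crucially, these two equivalence classes share the \emph{same} underlying representative in $\textsf{Operations}^{S}$, so the join product applies directly:
\[ U([W]^{A}) \odot V([W]^{B}) = \bigl[(U \times V \times I^{\overline{AB}})W\bigr]^{A} \odot \bigl[(U \times V \times I^{\overline{AB}})W\bigr]^{B} = \bigl[(U \times V \times I^{\overline{AB}})W\bigr]^{AB}. \]
Comparing with the expression obtained for the left-hand side finishes the argument.

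There is essentially no obstacle here: the real work was done earlier in establishing that the noumenal action is well defined on classes (so expanding $(U \times V)([W]^{AB})$ makes sense) and, more importantly, in Theorem~\ref{th:welldefined}, which guarantees that the join product is well defined. The only delicate point is making sure the Lemma is invoked with extensions on the correct complement systems---$V \times I^{\overline{AB}}$ lives on $\overline{A}$ and $U \times I^{\overline{AB}}$ lives on $\overline{B}$---so that both invocations yield the \emph{same} representative, allowing a single application of the join product at the end.
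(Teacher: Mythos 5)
Your proposal is correct and follows essentially the same route as the paper's proof: both sides are reduced to the common representative $\bigl[\left(U \times V \times I^{\overline{AB}}\,\right)W\bigr]^{AB}$ via the definition of the join product, the definition of the noumenal action, and the preceding Lemma (the paper simply writes this as a single chain of equalities rather than meeting in the middle). Your attention to which complement system each extension of the Lemma lives on is a correct and worthwhile clarification of a point the paper leaves implicit.
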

\begin{proof}
Let $C$ be the complement of system $AB$, and thus $ABC$ is the global system. We have
\begin{align*} & \left( U \times V \right) \left( \left[ W \right]^{A} \odot \left[ W \right]^{B} \right)  \\
= & \left( U \times V \right)  \left[ W \right]^{AB}  \\
=& \left[ \left( U \times V \times I^{C} \, \right) W \right]^{AB} \\
=& \left[ \left( U \times V \times I^{C} \, \right) W \right]^{A} \, \odot \, \left[ \left( U \times V \times I^{C} \,\right) W \right]^{B} \\
=& \left[ ( U \times I^{B} \times I^{C} ) ( I^{A} \times V \times I^{C} )  W \right]^{A} \, \odot \, \left[ ( I^{A} \times V \times I^{C} )  ( U \times I^{B} \times I^{C} )  W \right]^{B} \\
=& U  \! \left[ ( I^{A} \times V \times I^{C} \, ) W \right]^{A} \, \odot \, V \! \left[ ( U \times I^{B} \times I^{C} ) W \right]^{B} \\
=& ~   U \!   \left[ W \right]^{A} \, \odot \,  V   \! \left[ W \right]^{B}  \, .  \\[-7ex] 
\end{align*}
\end{proof}

\begin{definition}[Noumenal-phenomenal homorphisms]
For each phenomenal state $\rho$ in the phenomenal space of the global system,  and for each system $A$,
we define a noumenal-phenomenal homomorphism $\phi_{\rho}^{A}$ as follows:
\[ \phi_{\rho}^{A} \! \left(  \left[ W \right]^{A} \right)  \isdef \pi_{A} \! \left( W   \rho\right) \]
for any  noumenal state $ \left[ W \right]^{A}$,  where  $\pi_{A}$ is the phenomenal projector of the underlying no-signalling theory.
\end{definition}

When there is no ambiguity,  we shall omit writing the superscript indicating the system on which a noumenal-phenomenal homorphism act.  Thus we shall write $\phi_{\rho}$ instead of $\phi_{\rho}^{A}$.

The following theorem establishes that this definition does not depend on the choice of representative for equivalence class $\left[ W \right]^{A}$.

\begin{theorem}  Let $W$ and $W'$ be transformations on the global system, if $W' \sim_{A} W$ then it follows that $ \pi_{A} \! \left( W'  \rho   \right) = \pi_{A} \! \left( W  \rho \right)$.  
 \end{theorem}
 \begin{proof}
Suppose $W' \sim_{A} W$, by definition of $\sim_{A}$,
there exist a transformation $V$ on the complement of  system $A$  such that $W' = \left( I^{A} \times V \right) W$. By applying the no-signalling principle, we see that
\begin{align*} \pi_{A} \! \left( W'  \rho  \right)  =& ~ \pi_{A} \! \left( \left( \left( I^{A} \times V \right)  W  \right) \rho \right) \\
=& ~ \pi_{A} \! \left( \left( I^{A} \times V \right) \left( W \rho \right) \right) \\ 
= & ~ I^{A} \! \left( \pi_{A} \! \left( W \rho  \right) \right)  \\
= & ~ \pi_{A} \! \left( W \rho \right)  \, . \\[-7ex]
\end{align*}
\end{proof}

The definition of the application of $\phi_{\rho}^{A}$ on an arbitrary noumenal state $\lbrack W \rbrack^{A} $ indeed gives a phenomenal state of system $A$, namely $\pi_{A} ( W \rho ) $. Therefore we may write:  
\[ \phi_{\rho}^{A} \colon \textsf{Noumenal-Space}^{A} \to \textsf{Phenomenal-Space}^{A}\, . \]

\begin{theorem}\label{th:ishomo} For every system $A$ and for every  phenomenal state $\rho$ on the global system, the function $\phi_{\rho}^{A}$ is a noumenal-phenomenal homomorphism and satisfies
\[ U \,   \phi_{\rho}^{A} \! \left( \left[ W \right]^{A} \right)  = \phi_{\rho}^{A} \! \left(  U \left[ W \right]^{A} \right) \]
for any    noumenal state $\left[ W \right]^{A}$ and  transformation $U$ on $A$.
  \end{theorem}
\begin{proof} 
\begin{align*} 
 U \, \phi_{\rho} \! \left(  \left[ W \right]^{A} \right)   =& ~ U \,  \pi_{A} \! \left( W  \rho  \right)  \\ 
=& ~ \pi_{A} \! \left(  \left( U \times I^{\overline{A}} \, \right) \left( W  \rho  \right) \right) \\ 
=& ~ \pi_{A} \! \left( \left(  \left( U \times I^{\overline{A}} \, \right)  W \right)  \rho   \right) \\ 
= & ~ \phi_{\rho} \! \left(  \left[ \left( U \times I^{\overline{A}} \, \right) W \right]^{A} \right) \\
= & ~ \phi_{\rho} \! \left(  U  \left[ W \right]^{A} \right) \, . \\[-7ex]
\end{align*}
\end{proof}

The next theorem shows that for each $\rho$ in the phenomenal space of the global system $S$, the  noumenal-phenomenal homomorphism $\phi_{\rho}$ satisfies  Axiom    \ref{ax:relnouphe} on the relation between noumenal and phenomenal projectors.
\begin{theorem}\label{th:relnouphetrace}  Let $\rho$  be a phenomenal state belonging to  the global system $S$,
\[ \pi_{A} \! \left( \phi_{\rho} \! \left(  \left[  W \right]^{B} \right) \right) =  \phi_{\rho} \! \left(  \pi_{A} \! \left( \left[ W \right]^{B} \right) \right)
\]
for any system $B$, subsystem $A$ of system $B$  and   noumenal state $\left[ W \right]^{B}$.
\end{theorem}
\begin{proof}  
\begin{align*}
  \pi_{A} \! \left( \phi_{\rho} \! \left(  \left[ W \right]^{B} \right) \right)  =& ~ \pi_{A} \! \left( \pi_{B} \left( W \rho \right) \right) \\ 
=& ~ \left( \pi_{A} \circ \pi_{B} \right) \left( W \rho \right)  \\ 
=& ~ \pi_{A} \! \left( W  \rho  \right)  \\ 
= & ~ \phi_{\rho} \! \left( [ W ]^{B} \right) \\
= & ~ \phi_{\rho} \! \left(  \pi_{A} \! \left( [ W ]^{B} \right) \right) \, .  \\[-7ex]
\end{align*}
\end{proof}

The following theorem is valid if the given no-signalling theory is globally transitive (Postulate \ref{pos:globaltrans}).
\begin{theorem}\label{th:episurj}  Let $\rho$ be a phenomental state of the global system. For each system $A$, the function  
\[\phi_{\rho}^{A} \colon \textsf{Noumenal-Space}^{A} \to \textsf{Phenomenal-Space}^{A} \] 
is a noumenal-phenomenal-epimorphism.
\end{theorem}
\begin{proof}
Let $\rho^{A}$ be an arbitrary phenomenal state of system $A$. To prove the surjectivity, we need to find a noumenal  state $N^{A}$ of system $A$ such that $ \phi_{\rho}^{A} ( N^{A} ) = \rho^{A}$. 
 
By the surjectivity of the phenomenal projector, there exist a phenomenal state $\rho^{S}$ on the global system $S$ such that $\pi_{A} ( \rho^{S} ) = \rho^{A}$.  By the globally transitive postulate, there exist a transformation $U$ on the global system such that $ \rho^{S} = U \rho$.
Consider, noumenal state $N^{A} = \lbrack U \rbrack^{A} $ of system $A$, we have that  that \[ \phi_{\rho}^{A} ( N^{A} )=  \phi_{\rho}^{A} ( \lbrack U \rbrack^{A} ) = \pi_{A} ( U \rho ) = \pi_{A} ( \rho^{S} )  = \rho^{A} \, . \]

Since $ \phi_{\rho}^{A} ( N^{A} ) = \rho^{A}$, it follows that $\phi_{\rho}^{A}$ is surjective.
\end{proof}

Thus, we may fix an arbitrary phenomenal state  $\rho$ in the phenomenal space
of the global system S and associate to each system A the noumenal-phenomenal
epimorphism $\phi_{\rho}^{A}$ . This complete the construction of a local realistic model for any given no-signalling theory if the given no-signalling theory is 
globally transitive.

\subsection{Dropping the global transitivity postulate.}\label{sc:dropping}

If the given no-signalling theory is not globally transitive, the construction of the underlying  local-realistic model might not work, and will need to be modified.  Specifically, we shall define new noumenal spaces, and therefore  new  noumenal projectors denoted~$\pi'$, a new noumenal product denoted~$\odot'$, and a new noumenal action denoted  denoted $\star'$ and a noumenal-phenomenal epimorphism, denoted $\varphi$. These  new objects will be redefined by using the  objects previously constructed.

For any given system  $A$, we  redefine its noumenal space of as follows,
\[ \textsf{New-Noumenal-Space}^{A} \]
\[ \isdef \] 
\[ \left\{ \left( \lbrack W \rbrack^{A} , \rho \right) \colon W \in \textsf{Transformations}^{S} , \rho \in \textsf{Phenomenal-Space}^{S} \right\} \, .  \]

Thus a new noumenal state on system $A$ is of the form $ ( \lbrack W \rbrack^{A} , \rho ) $, for some transformation $W$ on the global system and some phenomenal state $\rho$ on the global system.

For any system $B$, and any subsystem $A$ of $B$ we define the projection of noumenal state $ ( \lbrack W \rbrack^{B}, \rho ) $ on system $A$ as follows
\[\pi'_{A} \! \left( \lbrack W \rbrack^{B} , \rho \right) \isdef \left( \pi_{A} ( \lbrack W \rbrack^{B} ) , \rho \right)    \, . \]

For any disjoint systems $A$ and $B$, we define the noumenal product as follows
\[\left( \lbrack W \rbrack^{A} , \rho \right) \odot' \left( \lbrack W \rbrack^{B} , \rho \right) \isdef \left( \lbrack W \rbrack^{A} \odot \lbrack W \rbrack^{B} , \rho \right)  \, .
 \]
As~before, the new noumenal product $\odot'$ is only defined on compatible states:
\mbox{$\left( N^{A} , \rho \right) \odot' \left( N^{B} , \rho' \right)$} is defined under conditions that
\mbox{$ \rho= \rho'$} , noumenal states 
$N^{A}$ and $N^{B}$ are compatible  in the original noumenal spaces,
and $A$ and $B$ are disjoint systems.

For any transformation $U$ on system $A$, we define its action on noumenal state $( \lbrack W \rbrack^{A}, \rho ) $ as follows
\[ U  \star' \left( \lbrack W \rbrack^{A} , \rho \right) \isdef \left(   U \star \lbrack W \rbrack^{A} , \rho \right)  \, . \]

For any system $A$, we define the mapping of noumenal-phenomenal epimorphism $ \varphi^{A}$ on noumenal state $ ( \lbrack W \rbrack^{A}, \rho ) $ as follows
\[\varphi^{A} ( \lbrack W \rbrack^{A} , \rho ) \isdef \phi^{A}_{\rho} ( W )  \, . \]

It is easy to verify that the new noumenal spaces, new noumenal product, new actions, new noumenal projections and the new noumenal-phenomenal epimorphism are well-defined and satisfy the axioms of local-realism. Let~us prove for example that the new noumenal product behaves properly, according to Axiom ~\ref{ax:join}. The other axioms of a local-realistic theories are verified similarly.
\begin{theorem}  For every composite system $AB$, for every operation $W$ on the global system and every phemomenal state $\rho$ on the global system,
\[\pi'_{A} \! \left( \lbrack W \rbrack^{AB} , \rho \right) \odot' \pi'_{B} \! \left( \lbrack W \rbrack^{AB} , \rho \right) = \left( \lbrack W \rbrack^{AB} , \rho \right) \, . \]
\end{theorem}
\begin{proof} The theorem follows from 
\begin{align*}
\pi'_{A} \! \left( \lbrack W \rbrack^{AB}, \rho \right) \odot' \pi'_{B} \! \left( \lbrack W \rbrack^{AB}  , \rho \right) 
&=  ( \pi_{A} ( \lbrack W \rbrack^{AB} )  , \rho ) \odot'  ( \pi_{B} ( \lbrack W \rbrack^{AB} ) , \rho ) \\
&=  \left( \lbrack W \rbrack^{A} , \rho  \right) \odot' \left( \lbrack W \rbrack^{B}, \rho \right) \\
&= \left( \lbrack W \rbrack^{A} \odot \lbrack W \rbrack^{B} , \rho \right) \\
&= \left( \lbrack W \rbrack^{AB} , \rho \right) \, .  \\[-7ex]
\end{align*}
\end{proof}

Furthermore, for each system $A$, $\varphi^{A}$ is indeed a noumenal-phenomenal epimorphism,
which was the purpose of the entire exercise.
To~prove this, it suffices to show that
$\varphi_{A}$ is a  noumenal-phenomenal \emph{homo}morphism (which is obvious) and that
it is surjective,
which is the purpose of the next theorem.

\begin{theorem}
For each system $A$, \[ \varphi^{A} \colon \textsf{New-Noumenal-Space}^{A} \to \textsf{Phenomenal-Space}^{A} \] is surjective.
\end{theorem}
\begin{proof}
Consider any phenomenal state $\rho^{A}$ of the system. By the surjectivity of phenomenal projector $\pi_{A}$ there exist  a phenomenal state $\rho$ of the global system $S$, such that $\rho^{A} = \pi_{A} ( \rho^{S}) $. Thus
\[ \varphi^{A} ( \lbrack I^{S} \rbrack, \rho ) 
 =  \phi^{A}_{\rho} ( \lbrack I^{S} \rbrack ) \\
       =         \pi_{A} ( I^{S} \rho ) \\
=  \pi_{A} ( \rho )  
 =  \rho^{A}  \, .\]
Since $ \varphi^{A} ( \lbrack I^{S} \rbrack, \rho ) = \rho^{A}$, it follows that $\varphi^{A}$ is surjective. 
\end{proof}

The existence of the required noumenal-phenomenal epimorphism is established,
which concludes the construction of a local-realistic  model  corresponding to the given no-signalling theory.

\section{Conclusion}\label{sc:conc}

We have seen previously that every local-realistic theory is a no-signalling theory.  Conversely any no-signalling theory that satisfies the postulates of separation, invertible dynamics and phenomenal faithfulness has a local-realistic model. However, phenomenal faithfulness can be satisfied by identifying phenomenally equivalent transformations  and   the separation postulate  is natural and could even have been included in the axioms of local-realistic theories with only slight  lost of elegance. Thus, the key point of this paper is that every  no-signalling theory with invertible dynamics has a local-realistic model.

Can the postulate of invertible dynamics be lifted? In other words, do all no-signalling theories still have an underlying local-realistic interpretation without this extraneous postulate? 
 It~is tempting to conjecture the affirmative, since one might think that perhaps every system could carry all information of all interactions it had with all systems in the past. However, this argument is invalid because systems are not allowed in our framework to do such a thing. For example, if an invertible transformation $U$ is  applied followed by  its inverse $U^{-1}$ on a system, it has the same effect as simply doing nothing on both the noumenal and phenomenal states of the system. It is simply impossible to distinguish between these two histories that a system might have experienced.   In the end, the invertible dynamics postulate is very natural, and satisfied by all modern theories of physics and  attempting to remove the postulate would only be  useful if it could lead to a deeper insight of physical theories, else it is a mere mathematical exercise.

If a physical  theory   can be given a  local-realistic model, does it mean that the theory is in fact local-realistic?
 As  explained to the author by David Deutsch, the notion of locality is not symmetric.
A physical theory could be given a \emph{non}-local model, simply by adding extraneous noumenal invisible entities that ``talk'' to each other instantaneously across space just for the fun of~it, without having any observable effect whatsoever. It is not meaningful to claim that a physical theory is nonlocal simply because it has some nonlocal model. Otherwise, all physical theories would be nonlocal!
To~be truly considered nonlocal, a theory must have no possible local-realistic model.
It~follows that any  physical theory that \emph{can} be given a local-realistic model \emph{is} in fact  local-realistic.

The question of whether or not quantum theory is  local-realistic should not be answered merely by providing a local-realistic model for~it, as it has been done in the past. Indeed, such an answer, while mathematically valid, fails to answer the deeper question: ``But~why is quantum theory  local-realistic?''. A~metaphysical question deserves an answer based on metaphysical principles rather than by the power of mathematics alone.
So,~why is quantum theory local-realistic? The answer is that it is a no-signalling theory with invertible dynamics.

\appendix 
 
\section{Phenomenal faithfulness}\label{sc:faithfulness}

In this appendix we show that a no-signalling  theory that might not be phenomenally faithful can be transformed into a phenomenally faithful no-signalling theory  by identifying two transformations whenever they are phenomenally equivalent. To prove this, we first state a few theorems on phenomenal equivalence.

Whenever $U$ and $V$ are phenomenally equivalent transformations on system $A$, then they can make absolutely no phenomenal difference \emph{locally} on system $A$ as stated by the following theorem that is valid in an arbitrary no-signalling theory.
\begin{theorem}\label{th:eqloc}
Let $A$ be a system and let $U$ and $V$ be phenomenally equivalent transformations on system $A$, then for all phenomenal states $\rho^{A}$ on system $A$:
\[ U \rho^{A}  = V \rho^{A}  \, .\]
\end{theorem}
\begin{proof}
Let $\rho^{A} $ be a phenomenal state of system $A$. Recall that  $\pi_{A} ( \rho^{A} )= \rho^{A}$   and that $0$ is the empty system, which is disjoint from system $A$.  By using  no-signalling and the phenomenal equivalence of $U$ and $V$, we obtain
\begin{align*}
U  \; \rho^{A}  &= U \;  \pi_{A} ( \rho^{A}) \\
 &= \pi_{A} (( U \times I^{0})  \rho^{A} ) \\
&= \pi_{A} ( ( V \times I^{0})  \rho^{A}  )  \\
&= V \;   \pi_{A} ( \rho^{A})  \\
&= V \;  \rho^{A}  \, . \\[-7ex]
\end{align*}
\end{proof}

The following theorem is valid in an arbitrary no-signalling theory.
\begin{theorem}\label{th:phenoeqcomp} Let $A$ be a system, let  $U_{1}$ and  $U_{2}$ be phenomenally equivalent transformations on system $A$, and let $V_{1}$ and $V_{2} $ be phenomenally equivalent transformations on system $A$. Transformation $U_{2} \circ U_{1}$ is phenomenally equivalent with $V_{2} \circ V_{1}$.
\end{theorem}
\begin{proof}
Let $B$ be a system disjoint from system $A$.
Let $\rho^{AB}$ be an arbitrary phenomenal state on composite system $AB$. From
\begin{align*}
(( U_{2} \circ U_{1} ) \times I^{B} ) \rho^{AB} & = (( U_{2} \times I^{B} ) ( U_{1} \times I^{B} ) ) \rho^{AB} \\
 & =( U_{2} \times I^{B} ) (  ( U_{1} \times I^{B})  \rho^{AB}  ) \\
&= (U_{2} \times I^{B} ) (  ( V_{1} \times I^{B})  \rho^{AB}  ) \\
& = (V_{2} \times I^{B} ) ( ( V_{1} \times I^{B} )  \rho^{AB} )  \\
& = ( ( V_{2} \times I^{B} ) ( V_{1} \times I^{B} ) ) \rho^{AB} \\
&= ( V_{2} \circ  V_{1} ) \times I^{B} ) \rho^{AB}  \, ,
\end{align*}
it follows that $U_{2} \circ U_{1}$ is phenomenally equivalent with $V_{2} \circ V_{1}$.
\end{proof}

The following theorem is valid in an arbitrary no-signalling theory.
\begin{theorem} \label{th:prodwelldef}
Let $AB$ be a composite system,  let $U^{A}$ and $V^{A} $ be phenomenally equivalent   transformations on $A$ and let  $U^{B}$ and $V^{B}$ be phenomenally equivalent  transformations on $B$. Transformations $U^{A} \times U^{B}$ and $V^{A} \times V^{B}$ are phenomenally equivalent transformations on   $AB$.
\end{theorem}
\begin{proof}
Let $C$ be a system disjoint from composite system $AB$ and
let $\rho^{ABC} $  be   an arbitrary phenomenal state of the system $ABC$. From
\begin{align*}
((U^{A} \times U^{B} ) \times I^{C} )  \rho^{ABC} 
&= (( U^{A} \times I^{B} \times I^{C} ) ( I^{A} \times  U^{B} \times I^{C }) ) \rho^{ABC}  \\
&=     (    U^{A} \times I^{B} \times I^{C} )   (  ( I^{A} \times  U^{B} \times I^{C} )  \rho^{ABC}  )     \\
& = (U^{A} \times I^{B} \times I^{C}  )   (   ( I^{A} \times  V^{B} \times I^{C} )  \rho^{ABC}  )     \\
& = V^{A} \times I^{B} \times I^{C}  )   (   ( I^{A} \times  V^{B} \times I^{C} )  \rho^{ABC}  )     \\
& = ( ( V^{A} \times I^{B} \times I^{C} ) ( I^{A} \times V^{B} \times I^{C} ) ) \rho^{ABC}  \\
&=((V^{A} \times V^{B} ) \times I^{C} )  \rho^{ABC}  \, ,
\end{align*}
it follows that $U^{A} \times U^{B}$ is phenomenally equivalent with $V^{A} \times V^{B} $.
\end{proof}

We now show how to transform a given no-signalling theory that might not satisfy the phenomenal faithfulness postulate into a phenomenally faithful no-signalling theory by identifying transformations whenever they are phenomenally equivalent. This modified no-signalling theory shall consist of the same lattice of system, the same phenomenal states, the  same phenomenal projections as the given theory. However the modified transformations shall be  defined as  equivalence classes of   transformations from the given theory.  This implies that the  product of transformation and the composition of transformations must be defined on equivalence classes of transformations.

Thus, we are given a lattice of systems  $\left( \mathcal{S}, \sqcup, \sqcap, \overline{\cdot} , S, 0 \right) $, where for~each system $A$, we are given:
\begin{enumerate}
\item A phenomenal state space, $\textsf{Phenomenal-Space}^{A}$,
\item  A monoid of transformations  $( \textsf{Transformations}^{A}, \circ^{A}, I^{A} )$, 
\item  A phenomenal action  ``$ \cdot^{A}$''.
\end{enumerate}
We~are also given a~phenomenal projector $\pi_{A}^{B}$ for each pair of systems where $A$ is a subsystem of $B$.
Furthermore, for each pair of disjoint systems $A$ and $B$, we are given
a product of transformations $\times_{A,B}$.

Let $U$ be a transformation on system $A$, we define $ \lbrack U \rbrack^{A} $, the \emph{phenomenal equivalence class} of $U$, to be the set of all transformations $V$ on system $A$ that are phenomenally equivalent to $U$.
Formally,
\[ \lbrack U \rbrack^{A} \isdef \{ V  \in \textsf{Transformations}^{A} \colon V \text{ is phenomenally equivalent to } U \} \, . \]
  Whenever there is no ambiguity, we drop the superscript and write $ \lbrack U \rbrack$ instead of $ \lbrack U \rbrack^{A}$.

The transformations on a  system $A$ in the modified  theory  are defined as its set of phenomenal equivalence classes. Formally,
\[ \textsf{New-Transformations}^{A} \isdef \left\{ \left[ U \right] \colon U \in \textsf{Transformations}^{A} \right\} \, . \]

Let  $A$ be a system, we define the composition of equivalence classes of transformations   by:
\[ \left[ U \right] \circ' \left[  V \right] \isdef \left[ U \circ V \right] \, , \]
for any transformations $U$ and $V$ on system $A$.
The newly defined composition is well defined since it does not depend on choice of representative as proven by the Theorem \ref{th:phenoeqcomp}.
It  is easy to verify that that  $ ( \textsf{New-Transformations} , \circ' , \lbrack I^{A} \rbrack )$ is a monoid of transformation.

Let  $A$ be a system, we define the  action of an equivalence class of transformations on phenomenal state by:
\[ \left[ U \right] \star' \rho^{A} \isdef  U \star \rho^{A} \, , \]
for any transformation $U$ on $A$ and phenomenal state $\rho^{A}$ on $A$.
The newly defined phenomenal  action is well defined since it does not depend on choice of representative as proved by Theorem \ref{th:eqloc}.  It is also easy to verify that we have indeed defined a phenomenal action and that the phenomenal faithfulness postulate is verified, which is the point of this construction.

Let $A$ and $B$ be disjoint systems, we define the product of equivalence classes of transformations by 
\[ \left[ U \right] \times' \left[ V \right] \isdef \left[ U \times V \right] \, ,  \]
for any transformation $U$ on system $A$ and any transformation $V$ on system $B$.
The newly defined  product  is well defined since it does not depend on choice of representative as proven by Theorem  \ref{th:prodwelldef}.   It is also  easy  to verify that the product of transformation verifies the five properties of  Axiom \ref{ax:nosigprod}. As an example, the next theorem shows that the no-signalling principle is satisfied.

\begin{theorem}
Let $AB$ be a composite system, 
\[ \pi_{A} ( ( \left[ U \right] \times' \left[ V \right] )  \star'  \rho^{AB}  ) = \left[ U \right] \star'  \pi_{A} ( \rho^{AB} ) \, . \]
for any pair of transformations  $U$ and $V$  on systems $A$ and $B$ respectively, and any phenomenal state $\rho^{AB}$ on $AB$.
\end{theorem}
\begin{proof}  The theorem follows from
\begin{align*}
      \pi_{A} ( ( \left[ U \right] \times' \left[ V \right]) \star'  \rho^{AB}  ) 
&=\pi_{A} ( (\left[ U \times V \right] )  \star'  \rho^{AB}  ) \\
&=  \pi_{A}(  \left( U \times V \right) \star  \rho^{AB} ) \\
&= U \star  \pi_{A}  ( \rho^{AB}  ) \\
& = \left[ U \right] \star'  \pi_{A} ( \rho^{AB} )   \, .  \\[-7ex]
\end{align*}
\end{proof}

Further, it is easy to verify that  whenever the given no-signalling theory  satisfies any postulate among invertible dynamics, separation and global transitivity then the newly constructed theory will also satisfies it.

This completes our construction, from which it follows that an arbitrary no-signalling theory can be modified into a phenomenally faithful no-signalling theory  by identifying phenomenally equivalent transformations.

\section{Noumenal faithfulness}
In this appendix we show that a theory that satisfies all axioms of local-realism with the possible exception of the faithfulness of the noumenal action (Axiom \ref{ax:faithnoume}) can be transformed into a  local-realistic theory with a faithful noumenal action  by identifying two transformations whenever they make no difference whatsover at the noumenal level. To prove this, we  first define some concepts and prove some theorems that apply in an arbitrary theory that satisfies all axioms of local-realism with the possible exception of the faithfulness of the noumenal action.

\begin{definition} Let $A$ be a system, let $U$ and $V$ be transformations on system $A$, we say that $U$ and $V$ are \emph{noumenally equivalent} on system $A$, whenever for all noumenal state $N^{A}$ on system $A$, noumenal state $ U N^{A}$ is equal to $V N^{A}$.
\end{definition}
It is easy to verify that noumenal equivalence on a system is an equivalence relation on the transformations of that system.  Note that if transformations $U$ and $V$ are noumenally equivalent, they are automatically phenomenally equivalent by Theorem \ref{th:nouphenoequ}.

\begin{theorem} \label{th:phacti}
Let $A$ be a system, let $U$ and $V$ be noumenally equivalent transformation on $A$, then for all phenomenal states $\rho^{A}$ on system $A$:
\[ U \rho^{A} = V \rho^{A} \, . \]
\end{theorem}
\begin{proof}   Since transformations $U$ and $V$ are also phenomenally equivalent,  the result follows directly by applying Theorem \ref{th:eqloc}.
\end{proof}

\begin{theorem} \label{th:comnoume}
Let $A$ be a system,  let $U_{1}$ and $V_{1}$ are noumenally equivalent transformation on system $A$   and  $U_{2}$ and $V_{2}$ are noumenally equivalent transformations on system $A$.  Transformation $U_{2} \circ U_{1}$ is noumenally equivalent to $ V_{2} \circ V_{1} $.
\end{theorem} 
\begin{proof}
Let $N^{A}$ be an arbitrary noumenal state of system $A$.  Since
\begin{align*}
(U_{2} \circ U_{1}) N^{A}  &= U_{2} ( U_{1} N^{A} )\\
&= U_{2} ( V_{1} N^{A} )   \\
&= V_{2} (V_{1}  N^{A}) \\
&= ( V_{2} \circ V_{1}) N^{A} \, ,
\end{align*}
it follows that $U_{2} \circ U_{1}$ is noumenally equivalent to $ V_{2} \circ V_{1}$ .
\end{proof}

\begin{theorem} \label{th:wdprod}
Let $AB$ be a composite system,  let $U^{A}$ and $V^{A} $ be noumenally equivalent   transformations on $A$ and let  $U^{B}$ and $V^{B}$ be  noumenally equivalent  transformations on $B$. It follows that transformations $U^{A} \times U^{B}$ and $V^{A} \times V^{B}$ are noumenally equivalent transformations on   $AB$.
\end{theorem}
\begin{proof}
Let $N^{AB} = N^{A} \odot N^{B} $ be an arbitrary noumenal state on system $AB$.  Since
\begin{align*} 
(U^{A} \times U^{B} ) ( N^{A} \odot N^{B} )& = ( U^{A}  N^{A} ) \odot (U^{B}  N^{B} ) \\
&= ( V^{A} N^{A} ) \odot ( V^{B} N^{B} ) \\
&=  ( V^{A} \times V^{B} ) ( N^{A} \odot N^{B} ) \, , \end{align*}
it follows that $U^{A} \times U^{B}$ is noumenally equivalent to $V^{A} \times V^{B}$.
\end{proof}

 We now show how to transform a  a theory that satisfies all axioms of local-realistic theory with the possible exception of the faithfulness of the noumenal action  (Axiom \ref{ax:faithnoume}) into a local-realistic theory where the noumenal action \emph{is}  faithful by identifying transformations whenever they are noumenally equivalent. This modified local-realistic theory shall consist of the same lattice of system, the same noumenal states, the same phenomenal states, the  same projections as the given theory, the same noumenal product and the same noumenal-phenomenal epimorphism. However the modified transformations shall be  defined as  equivalence classes of   transformations from the given theory.  This implies that the  product of transformations and the composition of transformations will be defined on equivalence classes of transformations.

Thus, we are given a lattice of systems  $\left( \mathcal{S}, \sqcup, \sqcap, \overline{\cdot} , S, 0 \right) $, where for~each system $A$, we are given:
\begin{enumerate}
\item A noumenal state space, $\textsf{Noumenal-Space}^{A}$,
\item A phenomenal state space, $\textsf{Phenomenal-Space}^{A}$,
\item  A monoid of transformations  $( \textsf{Transformations}^{A}, \circ^{A}, I^{A} )$, 
\item A noumenal action ``$ \star^{A}$'',
\item  A phenomenal action  ``$ \cdot^{A}$'',
\item A noumenal-phenomenal epimorphism $\varphi^{A}$.
\end{enumerate}
We~are also given a~ projector $\pi_{A}^{B}$ for each pair of systems where $A$ is a subsystem of $B$.
Furthermore, for each pair of disjoint systems $A$ and $B$, we are given
a product of transformations $\times_{A,B}$ and a noumenal product $\odot_{A,B} $.

Let $U$ be a transformation on system $A$, we define $ \lbrack U \rbrack^{A} $, the \emph{noumenal equivalence class} of $U$, to be the set of all transformations $V$ on system $A$ that are noumenally equivalent to $U$.
Formally,
\[ \lbrack U \rbrack^{A} \isdef \{ V  \in \textsf{Transformations}^{A} \colon V \text{ is noumenally equivalent to } U \} \, . \]
  Whenever there is no ambiguity, we drop the superscript and write $ \lbrack U \rbrack$ instead of $ \lbrack U \rbrack^{A}$.

The transformations on a  system $A$ in the modified  theory  are defined as its set of noumenal equivalence classes. Formally,
\[ \textsf{New-Transformations}^{A} \isdef \left\{ \left[ U \right] \colon U \in \textsf{Transformations}^{A} \right\} \, . \]

Let  $A$ be a system, we define the composition of equivalence classes of transformations   by:
\[ \left[ U \right] \circ' \left[  V \right] \isdef \left[ U \circ V \right] \, , \]
for any transformations $U$ and $V$ on system $A$.
The newly defined composition is well defined since it does not depend on choice of representative as proven by the Theorem \ref{th:comnoume}.
It  is easy to verify that that  $ ( \textsf{New-Transformations} , \circ' , \lbrack I^{A} \rbrack )$ is a monoid of transformations.

Let  $A$ be a system, we define the  action of an equivalence class of transformations on noumenal states by:
\[ \left[ U \right] \star' N^{A} \isdef  U \star N^{A} \, , \]
for any transformation $U$ on $A$ and phenomenal state $\rho^{A}$ on $A$.
The newly defined noumenal action is well defined since it does not depend on choice of representative.  It is also easy to verify that we have indeed defined a faithful noumenal action, which is the point of this construction.

Let  $A$ be a system, we define the  action of an equivalence class of transformations on phenomenal states by:
\[ \left[ U \right] \cdot' \rho^{A} \isdef  U \cdot \rho^{A} \, , \]
for any transformation $U$ on $A$ and phenomenal state $\rho^{A}$ on $A$.
The newly defined phenomenal  action is well defined since it does not depend on choice of representative as proved by  Theorem \ref{th:phacti}.  It is also easy to verify that we have indeed defined a phenomenal action.

Let $A$ and $B$ be disjoint systems, we define the product of equivalence classes of transformations by 
\[ \left[ U \right] \times' \left[ V \right] \isdef \left[ U \times V \right] \, ,  \]
for any transformation $U$ on system $A$ and any transformation $V$ on system $B$.
The newly defined  product  is well defined since it does not depend on choice of representative as proven by Theorem  \ref{th:wdprod}.   This product of transformation satisfy Axiom \ref{ax:SEPO} as proved by the next theorem.

\begin{theorem}Let $AB$ be a composite system, let $U$ and $V$ be transformations on  $A$ and $B$ respectively,  let $N^{AB} = N^{A} \odot N^{B}$ be  an arbitrary noumenal state on $AB$,
\[  (  \lbrack U \rbrack  \times'  \lbrack V \rbrack ) \star' (N^{A} \odot N^{B} ) = ( \lbrack U \rbrack \star' N^{A} ) \odot ( \lbrack V \rbrack \star' N^{B} ) \] 
\end{theorem}
\begin{proof} The theorem follows from
\begin{align*}
 (  \lbrack U \rbrack  \times'  \lbrack V \rbrack ) \star' (N^{A} \odot N^{B} ) & = (  \lbrack U  \times   V \rbrack ) \star' ( N^{A} \odot N^{B} ) \\
                                                                                                              & = ( U \times  V) \star (N^{A} \odot N^{B}) \\
                                                                                                                & = ( U \star N^{A} ) \odot ( V \star N^{B} ) \\
                                                                                                                & =  ( \lbrack U \rbrack \star' N^{A} ) \odot ( \lbrack V \rbrack \star' N^{B} )  \, . \\[-7ex]
\end{align*}
\end{proof}

It is similarly easy to verify that all axioms of a local-realistic theory are verified.
This completes our construction, from which it follows that the given theory  can be modified into a local-realistic theory with a faithful noumenal action by identifying noumenally equivalent transformations.

\section{Additional postulates of no-signalling theories in Quantum theory} \label{sc:postuquant}

In this appendix we prove that  separation (Postulate \ref{pos:separa}) holds in quantum theory.
Before doing so, we first recall some facts on vector spaces.

Given a vector space $V$ over a field $\mathbb{F}$, we shall denote by  $\mathcal{L} ( V) $  the set of linear maps from $V$ to to itself. 
 More formally,
\[ \mathcal{L} ( V )  = \{ L  \;  | \; L \colon V \to V \text{ is a linear map} \} \,  . \]

It is well known that  $\mathcal{L} ( V ) $ is itself a vector space over the field $\mathbb{F}$.

\begin{definition}[Product of  bases]  Let $V_{1} $ and $V_{2}$ be vector spaces over a field $\mathbb{F} $.
Let $\mathcal{B}_{1}$ be a basis of  $\mathcal{L} ( V_{1})$ and $\mathcal{B}_{2} $ be a basis  of  $\mathcal{L} ( V_{2} )  $. We define the tensor product of $\mathcal{B}_{1} $ and $\mathcal{B}_{2} $ to be
\[ \mathcal{B}_{1} \otimes \mathcal{B}_{2} \stackrel{\text{def}}{=} \{ L_{1} \otimes L_{2} \colon L_{1} \in \mathcal{B}_{1} , L_{2} \in \mathcal{B}_{2}  \} \, . \] 
\end{definition}
It is well-known  that $\mathcal{B}_{1} \otimes \mathcal{B}_{2}$ is a  basis of  $\mathcal{L} ( V_{1} \otimes V_{2}) $.

The next theorem implies the separation postulate in quantum theory and its proof is due to Michel Boyer  \cite{BOYER}.

\begin{theorem}
Let $\mathcal{H}^{A}$, $\mathcal{H}^{B}$ and $\mathcal{H}^{C}$ be Hilbert spaces, let $V^{AB}$ and $V^{BC}$ be unitary operations on  $\mathcal{H}^{A} \otimes \mathcal{H}^{B} $ and $\mathcal{H}^{B}\otimes \mathcal{H}^{C}$ respectively.  If
\[ V^{AB} \otimes I^{C}  =   I^{A} \otimes V^{BC} \, ,  \]
then there exist $V^{B}$, a unitary operation on $\mathcal{H}^{B}$ such that 
\[  V^{AB} \otimes I^{C}=I^{A} \otimes V^{B} \otimes I^{C}    =   I^{A} \otimes V^{BC} \, .   \]
\end{theorem}
\begin{proof}
Let $\mathcal{B}^{A}$  be a basis of  $\mathcal{L} ( \mathcal{H}^{A} ) $, such that $I^{A}$ is an element of $\mathcal{B}^{A}$, let $\mathcal{B}^{C}$ be  a basis of  $\mathcal{L} ( \mathcal{H}^{C} )$ such that $I^{C}$ is an element of $\mathcal{B}^{C}$ and  let $\mathcal{B}^{B}$ be an arbitrary basis of $\mathcal{L} ( \mathcal{H}^{B} )$.   Thus  $\mathcal{B}^{A} \otimes \mathcal{B}^{B} \otimes \mathcal{B}^{C}$ is a basis of $\mathcal{L} ( \mathcal{H}^{A} \otimes \mathcal{H}^{B} \otimes \mathcal{H}^{C}  ) $, therefore  $ V^{AB} \otimes I^{C} = I^{A} \otimes V^{BC}$ can be uniquely represented as a sum of  elements of it:
\[ V^{AB} \otimes I^{C} = I^{A} \otimes V^{BC}  = \sum\limits_{ijk} c_{ijk} B^{A}_{i} \otimes B^{B}_{j} \otimes B^{C}_{k}  \, . \]

Since \[ I^{A} \otimes V^{BC} =  \sum\limits_{ijk} c_{ijk} B^{A}_{i} \otimes B^{B}_{j} \otimes B^{C}_{k}  \, , \] 
this  implies that for all $i$ such that $B^{A}_{i} \neq I^{A}$ , the coefficient $c_{ijk}$ is equal to zero.

Equally since \[ V^{AB} \otimes I^{C} =  \sum\limits_{ijk} c_{ijk} B^{A}_{i} \otimes B^{B}_{j} \otimes B^{C}_{k}  \, , \] 
this implies that for all $k$ such that $B^{C}_{k} \neq I^{C}$ , the coefficient $c_{ijk}$ is equal to zero.

Let $i_{0}$ such that $I^{A} = B^{A}_{i_{0}}$ and let $k_{0}$ such that $I^{C} = B^{C}_{k_{0}}$.
By removing null coefficients, we obtain
 \[ V^{AB} \otimes I^{C} =  I^{A} \otimes V^{BC} =  \sum\limits_{j} c_{ i_{0} j k_{0}} I^{A} \otimes B^{B}_{j} \otimes I^{C} \, . \] 

Let  $V^{B}$ be equal to $ \sum\limits_{j} c_{i_{0}j k_{0}}  B^{B}_{j} $. Thus, the equation  above becomes
 \[ V^{AB} \otimes I^{C} = I^{A} \otimes V^{BC} = I^{A} \otimes V^{B} \otimes I^{C} \, .   \]
 Because $V^{BC} $ is unitary, $V^{C} $ must also be unitary.
\end{proof}


\begin{thebibliography}{99}

\small




\bibitem{Bell64}
Bell,~J.\,S,
``On the Einstein-Podolsky-Rosen paradox'',
{\textit{Physics}} \textbf{1}:195--200, 1964.

\bibitem{BOYER}
Boyer,~M.  Personnal correspondence.


\bibitem{FreeWill}
Brassard,~G. and Raymond-Robichaud,~P.,
``Can free will emerge from deter\-minism in quantum theory?'',
in~\textit{Is Science Compatible with Free Will?
\mbox{Exploring} Free Will and Consciousness in the Light of Quantum Physics and Neuroscience},
A.~Suarez and P.~Adams (editors), Springer, pp.~41--61, 2013.

\bibitem{ParLives}
Brassard,~G. and Raymond-Robichaud,~P.,
``Parallel lives: A~local-realistic interpretation of `nonlocal' boxes'',
\textit{Entropy}
\textbf{21}(1):87, 2019,  authoritative version available at \url{ https://arxiv.org/abs/1709.10016}.
.



\bibitem{Deutsch}
Deutsch,~D.,
``Quantum computational networks'',
\textit{Proceedings of the Royal Society of London}
\textbf{A425}:73--90, 1989.

\bibitem{DH}
Deutsch,~D. and Hayden,~P.,
``Information flow in entangled quantum systems'',
\textit{Proceedings of the Royal Society of London}
\textbf{A456}{(1999)}:1759--1774, 2000.


\bibitem{Kant-Brit}
Encyclop{\ae}dia Britannica, Editors of, ``Noumenon'',\\
\url{https://www.britannica.com/topic/noumenon}.

\bibitem{Indiscernibles}
Forrest,~P.,
``The identity of indiscernibles'', \textit{The Stanford Encyclopedia of Philosophy}, Edward N. Zalta (editor), 2016. Available at
\href{https://plato.stanford.edu/archives/win2016/entries/identity-indiscernible/}{\texttt{https://plato.stanford.}}\linebreak
\href{https://plato.stanford.edu/archives/win2016/entries/identity-indiscernible/}{\texttt{edu/archives/win2016/entries/identity-indiscernible/}}.

\bibitem{Qubism}
Fuchs,~C.\,A., Mermin,~N.\,D. and Schack,~R.,
``An~introduction to QBism with an application to the locality of quantum mechanics'',
\textit{American Journal of Physics} \textbf{82}(8):749--754, 2014.


\bibitem{structure}
Gericke,~H. and Martens,~H.,
``Some basic concepts for a theory of structure'',
in \textit{Fundamentals of Mathematics},
H.~Behnke, F.~Bachmann, K.~Fladt and W.~S\"uss (editors),
Vol.~1, Chap.~10 , pp.~508--521, 1958, English translation in 1974.

\bibitem{Kant}
Kant,~I.,
``Of the ground of the division of all objects into phenomena and noumena'', in \textit{The Critique of Pure Reason}, Chapter~3, 1781. Available at \\
\url{http://www.gutenberg.org/files/4280/4280-h/4280-h.htm}. 

\bibitem{Leibniz}
Leibniz,~W.\,G.,
``Discourse on metaphysics'', Section~9, 1686.
Available at
\href{http://www.earlymoderntexts.com/assets/pdfs/leibniz1686d.pdf}{\texttt{http:}}\linebreak
\href{http://www.earlymoderntexts.com/assets/pdfs/leibniz1686d.pdf}{\texttt{//www.earlymoderntexts.com/assets/pdfs/leibniz1686d.pdf}}.

\bibitem{LeifSpekA}
Leifer,~M.\,S. and Spekkens,~R.\,W.,
``Towards a formulation of quantum theory as a causally neutral theory of Bayesian inference'',
\textit{Physical Review~A} \textbf{88}:052130, 2013.

\bibitem{LeifSpekB}
Leifer,~M.\,S. and Spekkens,~R.\,W.,
``A~Bayesian approach to compatibility, improvement, and pooling of quantum states'',
\textit{Journal of Physics~A: Mathematical and Theoretical}
\textbf{47}:275301, 2014.

\bibitem{Gisin}
Masanes,~Ll., Ac\'{\i}n,~A. and Gisin,~N., ``General properties of nonsignaling theories'', \textit{Physical Review~A} \textbf{73}:012112, 2006.


\bibitem{PR}
Popescu,~S. and Rohrlich,~D.,
``Quantum nonlocality as an axiom'',
\textit{Foundations of Physics} \textbf{24}{(3)}:379--385, 1994.


\bibitem{POSTER}
Raymond-Robichaud,~P.,
``Parallel lives: A local real\-is\-tic inter\-pre\-tation of `nonlocal' boxes'',
Poster realized by Louis Fernet-Leclair.
Available online: \url{http://www.iro.umontreal.ca/~brassard/ParallelLives}, 2012.


\bibitem{QMlocal}
Raymond-Robichaud,~P.,
``A local-realistic model for quantum theory'', to appear in Proceedings of the Royal Society A. Available at \href{https://arxiv.org/abs/2010.14303}{arXiv:2010.14303 } [quant-ph], 2020.


\bibitem{PRRthesis}
Raymond-Robichaud,~P.,
 ``L'\'equivalence entre le local-r\'ealisme et le principe de non-signalement'',
PhD~Thesis, Universit\'e de Montr\'eal, 2017.
Accessible at \url{https://papyrus.bib.umontreal.ca/xmlui/handle/1866/20497}.
Accessed 16~September~2020.


\bibitem{toy}
Spekkens,~R.\,W.,
``Evidence for the epistemic view of quantum states: A~toy theory'',
\textit{Physical Review~A} \textbf{75}:032110, 2007.


\end{thebibliography}
\end{document}